\documentclass[journal]{IEEEtran}
\usepackage{graphicx}
\usepackage{cite}
\usepackage{amsmath,amssymb,amsfonts,amssymb}
\usepackage{graphicx}
\usepackage{epstopdf}
\usepackage{textcomp}
\usepackage{xcolor}
\usepackage{mathrsfs}
\usepackage{caption}
\usepackage{array}
\allowdisplaybreaks  
\usepackage{graphicx}
\usepackage{verbatim}
\usepackage{enumitem}
\usepackage{booktabs}
\usepackage{tabularx}
\usepackage{bm}
\usepackage{color}
\usepackage{hyperref}
\hypersetup{hidelinks}

\usepackage{cite}
\usepackage{amsmath}
\usepackage{psfrag}
\usepackage{amsfonts}
\usepackage{graphicx}
\usepackage{multirow}
\usepackage{float}
\usepackage{color} 
\usepackage{stfloats}
\usepackage{balance}
\usepackage{bm}
\usepackage{cuted}
\usepackage{dutchcal}
\usepackage[none]{hyphenat}
\usepackage[lined,ruled,commentsnumbered]{algorithm2e}
\usepackage{diagbox}
\usepackage{array}
\usepackage{booktabs}
\usepackage{url}
\usepackage{subfigure}
\usepackage{caption}
\allowdisplaybreaks[4]

\newtheorem{lemma}{Lemma}

\newtheorem{example}{Example}
\newtheorem{remark}{Remark}
\newtheorem{assumption}{Assumption}

\newcommand{\F}{\mathbf{F}}

\newcommand{\I}{\mathbf{I}}
\newcommand{\n}{\mathbf{n}}
\renewcommand{\v}{\mathbf{v}}
\newcommand{\p}{\mathbf{p}}

\newcommand{\w}{\mathbf{w}}

\newcommand{\B}{\mathbf{B}}
\newcommand{\z}{\mathbf{z}}
\renewcommand{\r}{\mathbf{r}}

\newcommand{\diag}{\textup{diag}}
\newcommand{\G}{\mathbf{G}}

\long\def\symbolfootnote[#1]#2{\begingroup\def\thefootnote{\fnsymbol{footnote}}
\footnote[#1]{#2}\endgroup}

\begin{document}

\title{\LARGE Deep Reinforcement Learning Based Block Coordinate Descent for Downlink Weighted Sum-rate Maximization on AI-Native Wireless Networks
}


  \author{ Siya~Chen, Chee~Wei~Tan and~H. Vincent Poor 
\thanks{S. Chen is with School of Computer Science and Technology, Dongguan University of Technology, Dongguan, China and was with the Department of Computer Science, City University of Hong Kong, Hong Kong (e-mail: chensiya@dgut.edu.cn).}
 \thanks{C. W. Tan is with Nanyang Technological University, Singapore, Nanyang Ave., Singapore (e-mail: cheewei.tan@ntu.edu.sg).}
\thanks{H. V. Poor is with the Department of Electrical and Computer Engineering, Princeton University, Princeton, NJ 08544 USA (e-mail: poor@princeton.edu).}
\thanks{This work is supported in part by an Innovation Grant from Princeton NextG, and the Singapore Ministry of Education Academic Research Fund MOE-T2EP20224-0009.}
}

\maketitle

\begin{abstract}
This paper introduces a deep reinforcement learning-based block coordinate descent (DRL-based BCD) algorithm to address the nonconvex weighted sum-rate maximization (WSRM) problem with a total power constraint. Firstly, we present an efficient block coordinate descent (BCD) method to solve the problem. While this method may not always achieve globally optimal solutions, it provides a pathway for integrating machine learning and domain-specific techniques with theoretical analysis of the underlying convexity of the subproblems. We then integrate deep reinforcement learning (DRL) techniques into the BCD method and propose the DRL-based BCD algorithm. This approach combines the data-driven learning capability of machine learning techniques with the navigational and decision-making characteristics of the optimization-theoretic-based BCD method. This combination significantly improves the algorithm's performance by reducing its sensitivity to initial points and mitigating the risk of entrapment in local optima.
The primary advantages of the proposed DRL-based BCD algorithm lie in its ability to adhere to the constraints of the WSRM problem and significantly enhance accuracy, potentially achieving the exact optimal solution. Moreover, unlike many pure machine-learning approaches, the DRL-based BCD algorithm capitalizes on the underlying theoretical analysis of the WSRM problem's structure. This enables it to be easily trained and computationally efficient while maintaining a level of interpretability. 
Moreover, the DRL-based BCD framework demonstrates strong extensibility and can effectively be applied to other scenarios, such as joint beamforming for sum rate maximization, as demonstrated in this paper.
Through numerical experiments, the DRL-based BCD algorithm demonstrates substantial advantages in effectiveness, efficiency, robustness, and interpretability for maximizing sum rates, which also provides valuable potential for designing resource-constrained AI-native wireless optimization strategies in next-generation wireless networks.
\end{abstract}

\IEEEoverridecommandlockouts

\begin{keywords}
Weighted sum-rate maximization, nonlinear Perron-Frobenius theory, block coordinate descent, reinforcement learning
\end{keywords}

\IEEEpeerreviewmaketitle

\section{Introduction}\label{sec:Introduction}

The growing complexity and demands of modern communication systems require the rapid advancement of wireless communication technologies to facilitate the development of more intelligent and adaptive wireless network architectures. AI-native wireless networks \cite{o2017introduction,hoydis2021toward,song2022benchmarking,lin2023fundamentals,chen2025openranet}, which integrate artificial intelligence (AI) across all layers of the network stack, present a promising solution for enhancing communication efficiency in large-scale systems. A promising area of exploration within AI-native wireless networks is the use of deep learning techniques for real-time decision-making and adaptive resource allocation that can respond to fluctuating user demands \cite{o2017introduction,hoydis2021toward,song2022benchmarking,lin2023fundamentals,chen2025openranet,hauffen2025deep}. One of the significant challenges that deep learning shows promise in addressing is maximizing sum rates, or total data throughput for multiple users within the wireless system \cite{sun2018learning,cui2019spatial,lee2020graph,mismar2019deep,waraiet2023robust,nasir2019multi,khan2020centralized,rahmani2022deep,vishnoi2023deep}. Maximizing the sum rate has long been a focal point in wireless network optimization research \cite{ chiang2007power, shi2011iteratively, zhang2018performance, shen2018fractional, ZhengTanJSAC213, TanTSP11,sun2018learning,cui2019spatial,lee2020graph,mismar2019deep,waraiet2023robust,nasir2019multi,khan2020centralized,rahmani2022deep,vishnoi2023deep,chen2023neural}. By employing deep learning algorithms, these systems can leverage large volumes of data to analyze and predict user behavior and channel conditions, enabling more efficient resource allocation that enhances the overall quality of service and ultimately improves user satisfaction.

This paper focuses on leveraging deep learning techniques to address the weighted sum rate maximization (WSRM) problem for AI-native wireless networks. Specifically, we examine the WSRM problem in downlink systems, where the base station transmits data to mobile devices equipped with multiple transmit and receive antennas under conditions where transmission power is a limited resource. This problem is important in practical applications because it enables the efficient use of limited transmission power resources and helps ensure compliance with wireless communication regulatory standards \cite{TanTSP11,Cai11,kim2020optimized}. Furthermore, it allows for an increase in the overall capacity of the system and facilitates effective management of interference levels. However, addressing this problem presents significant challenges due to its non-convex nature. Additionally, as the number of users increases, the task of efficiently allocating resources among them while maintaining computational efficiency becomes more complex.

Over the years, various algorithms for solving the sum rate maximization (SRM) problem have been developed, as presented in \cite{ chiang2007power, shi2011iteratively, zhang2018performance, shen2018fractional, ZhengTanJSAC213, TanTSP11}. These algorithms aim to strike a balance between computational complexity and achieving maximum overall system throughput. Recent advances in deep learning within AI-native wireless networks enable efficient end-to-end optimization, making these techniques an appealing strategy for enhancing wireless performance. Numerous studies have explored the SRM problem using deep learning techniques. A common approach involves using artificial neural networks to create models that map network states-such as channel conditions and user demands-to optimal resource allocation strategies. Once trained, these models can effectively allocate resources in response to new network states. Techniques studied in the context of sum rate maximization include deep neural networks (DNNs) \cite{sun2018learning}, convolutional neural networks (CNNs) \cite{cui2019spatial}, and graph neural networks (GNNs) \cite{lee2020graph}. This paper aims to contribute to this growing body of research by examining the effectiveness of these techniques in optimizing the sum rate in wireless communication systems.

Deep reinforcement learning (DRL) is another learning-based model that has been applied to sum rate maximization \cite{mismar2019deep,waraiet2023robust,nasir2019multi,khan2020centralized,rahmani2022deep,vishnoi2023deep}. The goal of DRL is to update the learning parameters iteratively to discover an optimal policy decision that maximizes long-term performance.  In the context of the SRM problem, the agent, usually representing the SRM problem or users, takes actions (e.g., power allocation, user scheduling) according to the policy, and receives rewards (e.g., sum-rate) based on the model performance. The agent then updates its policy based on the received rewards to improve its future actions \cite{mismar2019deep,nasir2019multi,khan2020centralized,waraiet2023robust,rahmani2022deep,vishnoi2023deep}. 
Various DRL-based power control algorithms have been applied to the SRM problem. 
Previous studies, such as \cite{mismar2019deep} and \cite{rahmani2022deep}, used single-agent DRL models, while \cite{nasir2019multi, khan2020centralized, vishnoi2023deep} proposed distributed dynamic algorithms employing multiple-agent DRL for power allocation.
\cite{nasir2019multi} and \cite{mismar2019deep} applied deep Q-learning, while \cite{rahmani2022deep,vishnoi2023deep} utilized deep deterministic policy gradient (DDPG), \cite{khan2020centralized} adopted trust region policy optimization (TRPO), and \cite{rahmani2022deep,waraiet2023robust} employed twin delayed deep deterministic policy gradient (TD3) to solve SRM problems.

 However, the general idea in most of the above DRL-based works is to create a sequence of states by using the wireless system parameters and then apply different  DRL techniques to learn power allocation actions. Due to the complexity and non-convexity of SRM problems, DRL models alone struggle to capture essential features without leveraging analytical properties, limiting their accuracy.
Additionally, pure machine learning methods require extensive parameter tuning and large datasets for large-scale networks, leading to longer training times and higher resource costs, which further challenge DRL model design and performance. 
Thus, while machine learning methods show promise, they are unlikely to fully replace traditional optimization-based approaches.
Integrating modern machine learning with traditional optimization offers a more promising approach to addressing the complex wireless optimization \cite{hoydis2021toward,o2017introduction,song2022benchmarking}. 
Therefore, more research is needed to explore the integration of modern machine learning with optimization methods to effectively address the SRM problem.

\begin{figure}
\centerline{\includegraphics[scale=0.38]{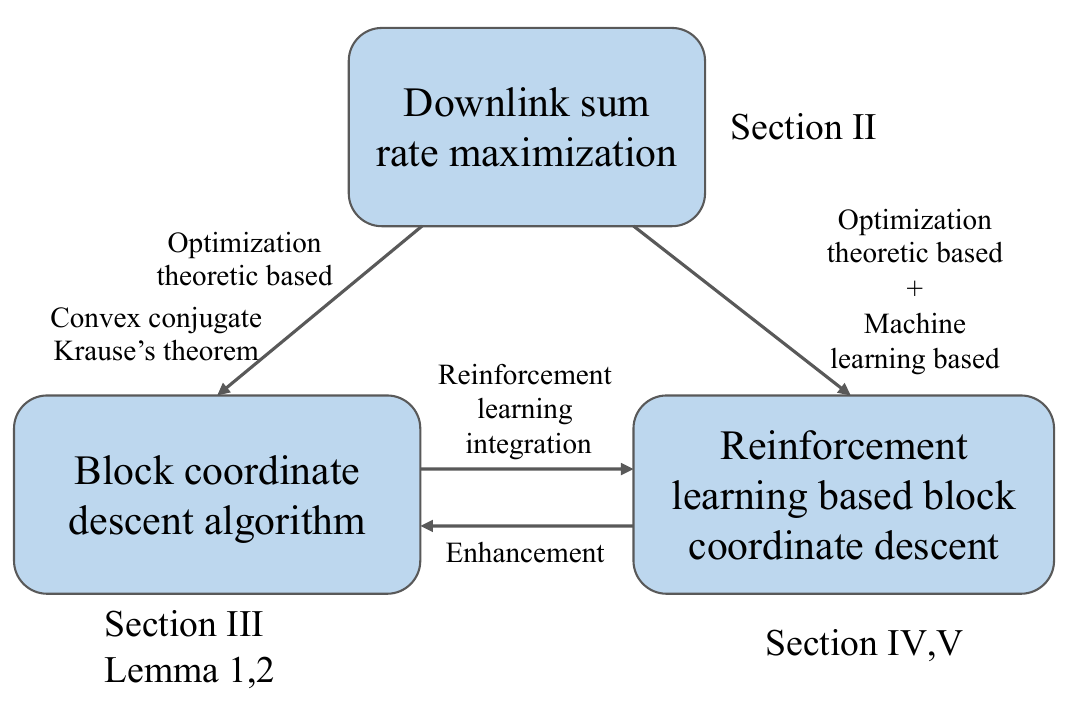}}
\captionsetup{font={footnotesize, stretch=1}, justification=raggedright}
\caption{ An overview of the relationship between the WSRM problem and the reinforcement learning based block coordinate descent algorithm.}
\label{fig:workflow}
\end{figure}

In this paper, we aim to integrate DRL models with optimization-theoretic techniques to solve the WSRM problem. One of the major advantages of this integration is the significant enhancement of algorithm accuracy, as it enables our algorithms not only to learn from data but also to exploit the convexity of the problem's underlying structure.
We begin by introducing an efficient BCD method for the WSRM problem. The BCD method was comprehensively analyzed by P. Tseng \cite{tseng2001convergence} and has also been employed in wireless power allocation \cite{pan2017joint,li2021sum}. While the proposed BCD method may not always achieve global optima, it effectively captures the intrinsic convexity of the original problems through theoretical analysis of convex subproblems, providing a basis for integrating deep learning with optimization techniques. Thus, we propose the DRL-based BCD algorithm, which combines data-driven DRL techniques with the BCD method's theoretical insights to tackle the WSRM problem.
Fig. \ref{fig:workflow} gives an overview of the problems and the DRL-based BCD algorithm.  
The motivation behind this approach is as follows:
1) Incorporating DRL techniques allows our algorithm to leverage data knowledge to uncover superior solutions;
2) BCD's navigational and decision-making characteristics align well with the  DRL algorithms, where the agent interacts with the environment to maximize rewards;
3) While BCD iterations are deterministic, poorly selected initial points may lead to local optima. Introducing DRL with exploration noise allows BCD to escape these local optima and reduces sensitivity to initial values;
4) Integrating BCD's theoretical analysis into DRL preserves the WSRM problem's analysis structures, which simplifies model design and training while enhancing interpretability compared to purely data-driven approaches;
and 5) Our proposed BCD method is highly efficient, and this efficiency carries over to the DRL-based BCD algorithm.
In summary, the significant contributions of our paper are as follows:
\begin{enumerate}
\item 
Firstly, we address the WSRM problem using the BCD method, which ensures convergence to a local optimum with high computational efficiency, though it may not always achieve global optimality.
\item 
Secondly, we propose a DRL-based BCD algorithm that combines the learning capabilities of DRL with the theoretical analysis of BCD. Its main contributions include guaranteeing the conditions for sum-rate maximization and significantly improving accuracy, potentially achieving the exact optimal solution. This is due to the integration of efficient iterations of optimal solutions for convex subproblems into the DRL state updating process, as well as the policy network's output for constructing those subproblems. Additionally, the DRL-based BCD algorithm is easier to train and more computationally efficient than pure DRL or optimization-based methods, while still maintaining interpretability.

\item Furthermore, we provide complexity analysis and convergence analysis for the DRL-based BCD, which offers theoretical assurances for its practical implementation. 
\textcolor{black}{
\item 
We also conduct additional analysis showing that the DRL-based BCD framework can be effectively extended for joint beamforming to maximize the sum rate, demonstrating the framework's strong scalability potential.}


\item Finally, we present numerical results demonstrating the performance of our DRL-based BCD algorithm, incorporating DDPG and TD3 techniques, which form the DDPG-based and TD3-based BCD algorithms, respectively. Additionally, we demonstrate that our algorithm outperforms baseline alternatives.
\end{enumerate}

The rest of the paper is organized as follows: In Section~\ref{sec:model}, we introduce the WSRM problem. In Section~\ref{sec:BCD}, we present the BCD method to solve the SRM problem. Section~\ref{sec:RL-based BCD} proposes the DRL-based BCD algorithm, which integrates DRL and BCD to solve the WSRM problem. Section~\ref{sec:extend} extends the application of DRL-based BCD for solving downlink WSRM beamforming. In Section~\ref{sec:expm}, we present the numerical simulations. Finally, Section~\ref{sec:conclusion} concludes the paper.

\section{System Model and Problem Formulation}\label{sec:model}
\textcolor{black}{
We consider a MIMO downlink system where $L$ independent data streams are transmitted over a shared frequency band. The transmitter is equipped with $N$ antennas, while the receiver for the $l$-th stream, denoted as $r_l$, has $N_l$ antennas. We assume that linear transmit and receive beamforming techniques are employed. The transmitted signal for the $l$-th stream is represented by $\mathbf{x}_l \in \mathbb{C}^{N \times 1}$, while the received signal at $r_l$, denoted as $\mathbf{y}_l \in \mathbb{C}^{N_l \times 1}$, is given by \cite{tan2015wireless}:
$$
\mathbf{y}_l = \mathbf{H}_l \mathbf{x} + \mathbf{z}_l, \quad l = 1, \ldots, L,
$$
where $\mathbf{H}_l \in \mathbb{C}^{N_l \times N}$ is the channel matrix between the transmitter to the receiver $r_l$, and $\mathbf{z}_l \sim \mathscr{N}(\bm{0}, n_l \mathbf{I})$ represents the circularly symmetric Gaussian noise vector with covariance $n_l \mathbf{I}$ at $r_l$ (with $n_l \in \mathbb{R}_{++}$). The transmitted signal vector $\mathbf{x}$ is defined as $\mathbf{x} = \sum_{l=1}^L \mathbf{x}_l = \sum_{l=1}^L d_l \sqrt{p_l} \mathbf{u}_l$, where $\mathbf{u}_l \in \mathbb{C}^{N \times 1}$ is the normalized transmit beamformer, and $d_l$ and $p_l \in \mathbb{R}_{++}$ denote the information signal and transmit power for that specific stream, respectively. Suppose the decoding process for the $l$-th stream, utilizing a normalized receive beamformer, is represented by $\mathbf{v}_l \in \mathbb{C}^{N_l \times 1}$. The power vector is defined as $\mathbf{p} = \left[p_1, \ldots, p_L\right]^{\top}$, the transmit and receive beamforming matrices as $\mathbb{U} = \left(\mathbf{u}_1, \ldots, \mathbf{u}_L\right)$ and $\mathbb{V} = \left(\mathbf{v}_1, \ldots, \mathbf{v}_L\right)$ respectively, and the noise vector as $\mathbf{n} = \left[n_1, \ldots, n_L\right]^{\top}$. Additionally, we introduce the matrix $\mathbf{G} \in \mathbb{R}_{++}^{L \times L}$, where each entry is defined as $G_{l i} = \left|\mathbf{v}_l^{\dagger} \mathbf{H}_l \mathbf{u}_i\right|^2$. Here, we fix the transmit and receive beamformers $\mathbb{U}$ and $\mathbb{V}$. The Signal-to-Interference-plus-Noise Ratio (SINR) for the $l$-th stream in the downlink is expressed as \cite{tan2015wireless}
\begin{align}\label{eq:sinr_fixed}
    \operatorname{SINR}_l\left(\mathbf{p}\right) = \frac{G_{ll}p_l}{\sum_{j=1,j \neq l}^L G_{lj}p_j + n_l}.
\end{align}
}

Next, let us define a nonnegative matrix $\F$ with entries
\begin{equation}\label{eq:matrixF}
F_{lj}=\left\{\begin{matrix}
0, &\mbox{if}\;\; l = j \\ 
G_{lj}/G_{ll}, &\mbox{if}\;\; l \ne j
\end{matrix}\right..
\end{equation}
We assume that $\F$ is irreducible, meaning that each link has at least one interferer. Additionally, we define the constant vector
$$
\bm{\sigma}= \left(\frac{n_1}{G_{11}}, \frac{n_2}{G_{22}}, \dots,\frac{n_L}{G_{LL}}\right)^{T}.
$$

\textcolor{black}{
Our objective is to determine the power allocation vector $\mathbf{p}$ that maximizes the total transmitted data rate in the downlink, i.e., to solve the WSRM problem through power control, which is stated as follows \cite{tan2015wireless}:
\begin{align}\label{eq:sumrate_pro}
\textup{maximize} \;\;\; &\sum_{l=1}^{L}w_l\log(1+\textup{SINR}_l\left(\mathbf{p}\right)) \nonumber\\
\textup{subject to}  \;\;\;&\mathbf{m}^{\top}\p\leq \Bar{P},\nonumber \\
\;\;\;&\p\geq\bm{0},\\
\textup{variables:} \;\;\;&\p,\nonumber
\end{align}
where $\mathbf{w}=[w_1,\ldots,w_L]^{\top}> \bm{0}$ with $\mathbf{1}^{\top} \mathbf{w}=1$ is a given probability vector and $w_l$ reflects the $l$-th link with a larger weight indicating a higher priority, and  $\mathbf{m}=\left[m_1, \ldots, m_L\right]^{\top}$ is a weight vector with each $m_l \in \mathbb{R}_{++}$.}

\section{Optimization-theoretic-based Algorithm for downlink sum-rate maximization}\label{sec:BCD}

Note that the problem (\ref{eq:sumrate_pro}) is non-convex due to the non-convexity of the objective function. Specifically, the expression $\sum_{l=1}^{L} w_l \log(1 + \textup{SINR}_l(\mathbf{p})) = \sum_{l=1}^{L} w_l \log\left(\frac{\sum_{j=1}^L G_{lj} p_j + n_l}{\sum_{j=1, j \neq l}^L G_{lj} p_j + n_l}\right)$
represents the weighted sum logarithm of a signomial \cite{boyd2004convex}. \textcolor{black}{Transforming the signomial in this objective function into a convex function using techniques such as change-of-variables is challenging.} Consequently, the problem is inherently difficult to solve directly.

To address this problem, we first introduce a transformation $\p \mapsto \bm{\gamma}(\p)$, where
 \begin{align}\label{eq:rl}
\gamma_l(\p)=\textup{SINR}_l(\mathbf{p}),\; l=1,...,L.
\end{align}
Meanwhile, $\p$ can be written as a function of the rate vector:
\begin{align}
    \mathbf{p}(\bm{\gamma})=(\I-\operatorname{diag}(\boldsymbol{\gamma}) \F)^{-1} \operatorname{diag}(\boldsymbol{\gamma}) \bm{\sigma}.
\end{align}
Then the WSRM problem can be reformulated as follows \cite{tan2015wireless}:
\begin{align}\label{eq:sumrate_pro_r}
\textup{maximize} \;\;\; &\sum_{l=1}^{L}w_l\log(1+\gamma_l), \nonumber\\
\textup{subject to}  \;\;\;&\rho\left(\operatorname{diag}(\bm{\gamma})\left(\mathbf{F}+\left(1 / \Bar{P}\right) \bm{\sigma}\mathbf{m}^{\top}\right)\right) \leq 1, \\
\textup{variables:} \;\;\;&\bm{\gamma}\in \mathbb{R}^L_+.\nonumber
\end{align}
Setting $\Tilde{\bm{\gamma}} = \log \bm{\gamma}$,  we can further transform \eqref{eq:sumrate_pro_r} to the following problem \cite{tan2015wireless}:
\begin{align}\label{eq:sumrate_pro2}
\textup{maximize} \;\;\; &\sum_{l=1}^{L}w_l\log(1+e^{\Tilde{\gamma}_l}), \nonumber\\
\textup{subject to}  \;\;\;&\log\rho\left(\operatorname{diag}(e^{\Tilde{\bm{\gamma}}})\left(\mathbf{F}+\left(1 / \Bar{P}\right) \bm{\sigma}\mathbf{m}^{\top}\right)\right) \leq 0, \nonumber\\
\textup{variables:} \;\;\;&\Tilde{\bm{\gamma}}\in \mathbb{R}^L.
\end{align}

Due to the log-convexity property of the Perron-Frobenius eigenvalue \cite{kingman1961convexity}, $\log\rho\left(\operatorname{diag}(e^{\Tilde{\bm{\gamma}}})\left(\mathbf{F}+\left(1 / \Bar{P}\right) \bm{\sigma}\mathbf{m}^{\top}\right)\right)$ is a convex function in $\Tilde{\bm{\gamma}}$. However, the objective function remains nonconcave, posing a challenge in solving \eqref{eq:sumrate_pro2}. In the following discussion, we introduce an efficient BCD algorithm capable of converging to at least a local optimum and potentially reaching the global optimum when the initial point is appropriately chosen. While the proposed BCD algorithm may not consistently converge to the global optimum, its methodological efficiency and intrinsic analytical insights of \eqref{eq:sumrate_pro} offer a valuable pathway for integrating machine learning and domain-specific techniques. This integration provides new strategies with improved efficiency and better performance for addressing \eqref{eq:sumrate_pro}, as we will discuss in the next section.

Now we present the proposed BCD method for solving \eqref{eq:sumrate_pro2}. We begin by utilizing a well-known result that the log-sum-exp function is the convex conjugate of the negative entropy \cite{boyd2004convex}, which is expressed as follows:
\begin{align}\label{eq:entropy}
    \log \left(\sum_{n=1}^Ne^{x_n}\!\right)\!=\!\max_{\bm{y}}\!\left\{\bm{x}^T \bm{y}\!-\!\!\sum_{n=1}^N y_n \ln y_n\!:\! \bm{y} \!\geq\! \bm{0},\! \sum_{n=1}^N y_n\!=\!1\!\right\}.
\end{align}
By applying \eqref{eq:entropy} to the objective function in \eqref{eq:sumrate_pro2}, we can rewrite \eqref{eq:sumrate_pro2} as
\begin{align}\label{eq:sumrate_pro3}
\textup{maximize} \;\;\; &\sum_{l=1}^{L}w_l \left( \Tilde{\gamma}_ly_{l2}-\sum_{n=1}^2y_{ln}\log y_{ln}  \right) \nonumber\\
\textup{subject to}  \;\;\;&\log\rho\left(\operatorname{diag}(e^{\Tilde{\bm{\gamma}}})\left(\mathbf{F}+\left(1 / \Bar{P}\right) \bm{\sigma}\mathbf{m}^{\top}\right)\right) \leq 0, \nonumber\\
&\sum_{n=1}^2y_{ln}=1 \quad \forall l, \\
&y_{ln} \geq 0 \quad \forall l,n, \nonumber\\
\textup{variables:} \;\;\;&\Tilde{\gamma}_l, y_{ln},\;\forall\;l,n.\nonumber
\end{align}

Note that \eqref{eq:sumrate_pro3} exhibits convexity when considering the variable $\Tilde{\bm{\gamma}}$ or $\bm{y}$ individually. This observation motivates the utilization of the BCD method \cite{tseng2001convergence}, also known as the Gauss-Seidel method, to solve \eqref{eq:sumrate_pro3}.  In this approach, only one block of variables is optimized at each iteration while the remaining blocks are held constant. Specifically, at the $t$-th iteration of the BCD method for solving \eqref{eq:sumrate_pro3}, the block variable $\Tilde{\bm{\gamma}}^{(t+1)}$ is updated by solving the following subproblem:
\begin{align}\label{eq:sumrate_linear}
\textup{maximize} \;\;\; &\sum_{l=1}^{L}w_ly_{l2}^{(t)}\Tilde{\gamma}_l\nonumber\\
\textup{subject to}  \;\;\;&\log\rho\left(\operatorname{diag}(e^{\Tilde{\bm{\gamma}}})\left(\mathbf{F}+\left(1 / \Bar{P}\right) \bm{\sigma}\mathbf{m}^{\top}\right)\right) \leq 0, \\
\textup{variables:} \;\;\;&\Tilde{\gamma}_l, \;\forall\;l.\nonumber
\end{align}
Then the block variable $\bm{y}^{(t+1)}$ is updated by solving the following subproblem:
\begin{align}\label{eq:sumrate_entropy}
\textup{maximize} \;\;\; &\sum_{l=1}^{L}w_l \left( \Tilde{\gamma}^{(t+1)}_ly_{l2}-\sum_{n=1}^2y_{ln}\log y_{ln}  \right) \nonumber\\
\textup{subject to}  \;\;\;&\sum_{n=1}^2y_{ln}=1 \quad \forall l, \\
&y_{ln} \geq 0 \quad \forall l,n, \nonumber\\
\textup{variables:} \;\;\;&y_{ln},\;\forall\;l,n.\nonumber
\end{align}
It can be shown that the optimal solution of \eqref{eq:sumrate_entropy}, i.e., the updated  $y_{ln}^{(t+1)}$ is
\begin{align}\label{eq:solution_y}
   y_{ln}^{(t+1)} =\left\{\begin{matrix}
\frac{1}{1+e^{\Tilde{\gamma}^{(t+1)}_l}}, \quad n=1, \\ 
\frac{1}{1+e^{-\Tilde{\gamma}^{(t+1)}_l}},\quad  n=2.
\end{matrix}\right.
\end{align}

Then let us solve the problem \eqref{eq:sumrate_linear}. Notice that \eqref{eq:sumrate_linear} is convex, so one can use the interior-point method \cite{boyd2004convex} to obtain the optimal solution. However, we propose an alternative method, which is significantly faster for solving \eqref{eq:sumrate_linear}. The details are presented in Lemma \ref{thm:thm1}.
\begin{lemma}\label{thm:thm1}
     The optimal solution $\Tilde{\bm{\gamma}}^*$ of \eqref{eq:sumrate_linear}  can be expressed as follows:
     \begin{align}
         \Tilde{\gamma}_l^* = \log(z^*_l/(\B\mathbf{z^*})_l),
     \end{align}
     where  $\B = \mathbf{F}+\left(1 / \Bar{P}\right) \bm{\sigma}\mathbf{m}^{\top}$, and $z^*_l$ satisfies 
     \begin{align}\label{eq:fixed_point}
       \sum_i w_iy_{i2}^{(t)} \frac{b_{il}z_i^*}{\sum_j b_{ij}z_j^*} = w_ly_{l2}^{(t)}.
    \end{align}
   Moreover, $z^*_l$ can be obtained by the following iteration:
    \begin{align}\label{eq:iteration}
        z_l^{(k+1)} =  \frac{w_ly_{l2}^{(t)}}{\sum_i w_iy_{i2}^{(t)} \frac{b_{il}}{\sum_j b_{ij}z_j^{(k)}}}.
    \end{align}
Furthermore, the optimal Lagrangian dual $\lambda^*$ of \eqref{eq:sumrate_linear} is given by $\lambda^* = \sum_l w_ly_{l2}^{(t)}$.

\end{lemma}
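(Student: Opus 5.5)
Set $\bm{\alpha}=(w_ly_{l2}^{(t)})_l>\bm{0}$, so that \eqref{eq:sumrate_linear} reads: maximize $\bm{\alpha}^{\top}\Tilde{\bm{\gamma}}$ subject to $\log\rho(\diag(e^{\Tilde{\bm{\gamma}}})\B)\le 0$. Since $\F$ is irreducible and $\bm{\sigma}\mathbf{m}^{\top}>\bm{0}$, the matrix $\B$ is positive, so the Perron--Frobenius theory applies to $\diag(e^{\Tilde{\bm{\gamma}}})\B$.

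The constraint is active at the optimum. Indeed, increasing any $\Tilde{\gamma}_l$ strictly increases both the objective (because $\alpha_l>0$) and the spectral radius. We therefore work on the boundary $\rho(\diag(e^{\Tilde{\bm{\gamma}}})\B)=1$.

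Fix such a boundary point and let $\mathbf{z}>\bm{0}$ be the right Perron vector of $\diag(e^{\Tilde{\bm{\gamma}}})\B$. The eigen-equation with eigenvalue $1$ gives $e^{\Tilde{\gamma}_l}(\B\mathbf{z})_l=z_l$, i.e. $\Tilde{\gamma}_l=\log(z_l/(\B\mathbf{z})_l)$. Conversely, every $\mathbf{z}>\bm{0}$ yields a boundary point in this way, since $\mathbf{z}$ is then a positive eigenvector with eigenvalue $1$, and so $\rho=1$. Hence \eqref{eq:sumrate_linear} is equivalent to the unconstrained problem
\begin{align*}
\textup{maximize}_{\mathbf{z}>\bm{0}}\;\; \Phi(\mathbf{z})=\sum_l \alpha_l\log z_l-\sum_l\alpha_l\log(\B\mathbf{z})_l,
\end{align*}
which is invariant under scaling of $\mathbf{z}$.

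Setting $\partial\Phi/\partial z_l=0$ gives $\alpha_l/z_l=\sum_i\alpha_i b_{il}/(\B\mathbf{z})_i$. Multiplying by $z_l$ yields exactly \eqref{eq:fixed_point}. Rearranging the same condition gives the fixed-point map \eqref{eq:iteration}, $z_l=\alpha_l/\sum_i \alpha_i b_{il}/(\B\mathbf{z})_i$.

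To show that a stationary point is the global optimum, one could substitute $z=e^{\mathbf{s}}$. Alternatively, and more cleanly, one can rely on convexity of the original problem: \eqref{eq:sumrate_linear} is convex by the log-convexity of $\rho$ \cite{kingman1961convexity}, so it suffices to check the KKT conditions at the candidate point.

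The Lagrangian is $\bm{\alpha}^{\top}\Tilde{\bm{\gamma}}-\lambda\log\rho(\diag(e^{\Tilde{\bm{\gamma}}})\B)$. The gradient of $\log\rho$ with respect to $\Tilde{\gamma}_l$ is $x_lz_l/(\mathbf{x}^{\top}\mathbf{z})$, where $\mathbf{x}$ is the left Perron vector; these components sum to $1$. Stationarity then requires $\alpha_l=\lambda x_lz_l/(\mathbf{x}^{\top}\mathbf{z})$. Summing over $l$ gives $\lambda^*=\sum_l\alpha_l=\sum_l w_ly_{l2}^{(t)}$, which is the claimed dual value.

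It remains to verify that \eqref{eq:fixed_point} is precisely the statement that $x_l\propto\alpha_l/z_l$ is a left eigenvector. The left eigen-equation of $\diag(e^{\Tilde{\bm{\gamma}}})\B$ is $\sum_i x_ie^{\Tilde{\gamma}_i}b_{il}=x_l$. Substituting $e^{\Tilde{\gamma}_i}=z_i/(\B\mathbf{z})_i$ and $x_i=\alpha_i/z_i$ gives $\sum_i\alpha_ib_{il}/(\B\mathbf{z})_i=\alpha_l/z_l$, which is \eqref{eq:fixed_point}. This confirms the KKT conditions and hence optimality.

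The main obstacle is convergence of the iteration \eqref{eq:iteration}. The map $T(\mathbf{z})_l=\alpha_l/\sum_i\alpha_ib_{il}/(\B\mathbf{z})_i$ is monotone ($\mathbf{z}\le\mathbf{z}'\Rightarrow T\mathbf{z}\le T\mathbf{z}'$) and homogeneous of degree one. It is also strictly contractive in Hilbert's projective metric, because $\B>\bm{0}$. The plan is to invoke nonlinear Perron--Frobenius theory (Krause's theorem, or Birkhoff--Hopf) to conclude that $T$ has a unique positive fixed ray and that the iterates converge to it projectively. The limit is scale-invariant, so it determines $\Tilde{\bm{\gamma}}^*$ uniquely.

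Two further points need care. First, establishing the strict contraction rigorously may require the positivity of $\B$. Second, existence of a positive fixed point should be argued first, either via Brouwer's theorem on the simplex after normalization or via a compactness argument.
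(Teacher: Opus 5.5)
Your proof is correct, but it takes a different route from the paper's.

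\textbf{How the paper argues.} The paper never parametrizes the boundary by the Perron vector. It replaces $\rho(\diag(e^{\Tilde{\bm{\gamma}}})\B)$ by the geometric program $\min\{\rho:\diag(e^{\Tilde{\bm{\gamma}}})\B\z\le\rho\z\}$. It then writes one Lagrangian in $(\Tilde{\bm{\gamma}},\Tilde{\rho},\Tilde{\z})$ with multipliers $\lambda$ and $\bm{\nu}$, and sets the three partial derivatives to zero. This gives $\lambda^*=\sum_l w_ly_{l2}^{(t)}$, $\nu_l=w_ly_{l2}^{(t)}/\sum_i w_iy_{i2}^{(t)}$ and the fixed-point equation in one stroke. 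Convergence of \eqref{eq:iteration} is taken from Krause's concave Perron--Frobenius theorem. Finally the paper freezes $\z^*$, replaces the spectral constraint by the $L$ constraints $e^{\Tilde{\gamma}_l}(\B\z^*)_l\le z_l^*$, and reads $\Tilde{\gamma}_l^*$ off that problem's KKT system. Its multipliers are $u_l^*=w_ly_{l2}^{(t)}/z_l^*$.

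\textbf{How the two relate.} The two arguments are dual views of the same computation. The paper's $\nu_l$ is your $x_lz_l/(\mathbf{x}^{\top}\z)$, and the paper's $u_l^*$ is exactly your left Perron vector $\alpha_l/z_l^*$.

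\textbf{What your route buys.} It gives a genuine sufficiency certificate. With $\z^*$ frozen, the constraints $e^{\Tilde{\gamma}_l}(\B\z^*)_l\le z_l^*$ describe only a subset of the feasible set of \eqref{eq:sumrate_linear}, by subinvariance. The paper calls the two problems equivalent without further argument. Your check that $\bm{\alpha}/\z^*$ is a positive left eigenvector of $\diag(e^{\Tilde{\bm{\gamma}}^*})\B$, inserted into the KKT conditions of the convex problem, is exactly what closes that step. Your boundary parametrization also gives necessity: every optimizer has the stated form. The Hilbert-metric contraction gives uniqueness of $\Tilde{\bm{\gamma}}^*$ and a geometric rate; here Birkhoff--Hopf applies to $\B$ and $\B^{\top}$, and componentwise inversion and diagonal scaling are isometries.

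\textbf{What the paper's route buys.} It obtains the whole primal--dual structure from a single GP Lagrangian, without differentiating the Perron root.

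\textbf{Small remarks.}
\begin{itemize}
\item A projective contraction only yields $T(\z^*)=c\,\z^*$. Summing $\alpha_l=c\,z_l^*\sum_i\alpha_ib_{il}/(\B\z^*)_i$ over $l$ forces $c=1$, so the fixed ray is a genuine fixed point. Banach's theorem in Hilbert's metric already supplies existence, so Brouwer is unnecessary.
\item The printed \eqref{eq:fixed_point} has $b_{il}z_i^*$ in the numerator. What you derive is the version with $b_{il}z_l^*$, which is the correct one. It is also what \eqref{eq:iteration} and the paper's own $\partial/\partial\Tilde{z}_l$ computation give. So your ``exactly \eqref{eq:fixed_point}'' should refer to that corrected form.
\item Positivity of $\B$ follows from $\bm{\sigma}\mathbf{m}^{\top}>\bm{0}$ alone; irreducibility of $\F$ plays no role.
\end{itemize}
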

\begin{IEEEproof}
    Please see the appendix for a detailed proof. 
\end{IEEEproof}

By iteratively solving \eqref{eq:sumrate_linear} using Lemma \ref{thm:thm1} and addressing \eqref{eq:sumrate_entropy} with \eqref{eq:solution_y}, we propose Algorithm \ref{alg:alg1}, an efficient BCD method  to solve \eqref{eq:sumrate_pro3}.

\begin{algorithm}
\SetAlgoLined
\textcolor{black}{
\KwIn{The system parameters $\G$, $\n,\mathbf{w},\Bar{P}$ and randomly initialized  $y_{ln}^{(0)}$ satisfying $\sum_{n=1}^2y_{ln}^{(0)}=1$ and $y_{ln}^{(0)}\geq 0$.}
\KwOut{The locally optimal solution $\p^{*}$ to \eqref{eq:sumrate_pro}.} 
}
\!1. Update $\Tilde{\bm{\gamma}}^{(t+1)}$:
\begin{align}\label{eq:gamma}
    \Tilde{\gamma}_l^{(t+1)} = \log(z_l^{*}/(\B\mathbf{z}^{*})_l),
\end{align}
where $z^{*}_l$ is the convergence of the following iteration with randomly initial $z_l^{(0)}$:
\begin{align}\label{eq:iteration_re}
        z_l^{(k+1)} =  \frac{w_ly_{l2}^{(t)}}{\sum_i w_iy_{i2}^{(t)} \frac{b_{il}}{\sum_j b_{ij}z_j^{(k)}}}.
    \end{align}
2. Update $y_{l2}^{(t+1)}$:
\begin{align}
    &y_{l2}^{(t+1)} = \frac{e^{\Tilde{\gamma}^{(t+1)}_l}}{1+e^{\Tilde{\gamma}^{(t+!)}_l}}.
\end{align}
3. Obtain optimal $\Tilde{\bm{\gamma}}^{*}$ by repeating Step 1 to Step 2 until convergence.\\
\textcolor{black}{
4. Obtain optimal power $\mathbf{p}^*$ with:
$$\mathbf{p}^*(\bm{\gamma}^*)=(\I-\operatorname{diag}(\boldsymbol{\gamma}^*) \F)^{-1} \operatorname{diag}(\boldsymbol{\gamma}^*) \bm{\sigma}.$$}
\caption{BCD for solving the WSRM problem}
\label{alg:alg1}
\end{algorithm}

\begin{lemma}\label{lem:lem2}
    Starting from any initial point $\bm{y}^{(0)}$, \textcolor{black}{$\Tilde{\bm{\gamma}}{(t)}$ in Algorithm \ref{alg:alg1} converges to at least one locally optimal solution $\Tilde{\bm{\gamma}}^{*}$}.
\end{lemma}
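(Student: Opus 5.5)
The plan is to view Algorithm~\ref{alg:alg1} as a two-block BCD on \eqref{eq:sumrate_pro3} and apply Tseng's convergence theory \cite{tseng2001convergence}. Denote the objective by
\begin{align*}
\Phi(\Tilde{\bm{\gamma}},\bm{y})=\sum_{l=1}^{L}w_l\Big(\Tilde{\gamma}_ly_{l2}-\sum_{n=1}^2y_{ln}\log y_{ln}\Big),
\end{align*}
with feasible set $\mathcal{C}\times\mathcal{Y}$. Here $\mathcal{C}=\{\Tilde{\bm{\gamma}}:\log\rho(\operatorname{diag}(e^{\Tilde{\bm{\gamma}}})\B)\le 0\}$ is closed and convex, by the log-convexity of the Perron--Frobenius eigenvalue \cite{kingman1961convexity}. $\mathcal{Y}$ is a product of simplices. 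The constraints are therefore separable across the blocks, which is the setting of the Gauss--Seidel analysis.

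\textbf{Step 1: monotonicity and boundedness.}
\begin{itemize}
\item Step~1 of the algorithm returns the exact maximizer of \eqref{eq:sumrate_linear}, by Lemma~\ref{thm:thm1}.
\item Step~2 returns the exact maximizer \eqref{eq:solution_y} of \eqref{eq:sumrate_entropy}.
\end{itemize}
Hence $\Phi(\Tilde{\bm{\gamma}}^{(t+1)},\bm{y}^{(t+1)})\ge\Phi(\Tilde{\bm{\gamma}}^{(t+1)},\bm{y}^{(t)})\ge\Phi(\Tilde{\bm{\gamma}}^{(t)},\bm{y}^{(t)})$.

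By \eqref{eq:entropy}, after each $\bm{y}$-update we have $\Phi=\sum_l w_l\log(1+e^{\Tilde{\gamma}_l})$. Each $\Tilde{\gamma}_l$ is bounded above on $\mathcal{C}$, because $\gamma_l B_{ll}\le\rho(\operatorname{diag}(\bm{\gamma})\B)\le1$ and $B_{ll}=\sigma_l m_l/\Bar P>0$. So the objective sequence is nondecreasing and bounded, and it converges.

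\textbf{Step 2: compactness of the iterates.} I need the iterates to stay in a compact set, which amounts to keeping $\Tilde{\gamma}_l^{(t)}$ from going to $-\infty$. The idea is to use the structure of Lemma~\ref{thm:thm1}.
\begin{itemize}
\item Because $y_{l2}^{(t)}=1/(1+e^{-\Tilde{\gamma}_l^{(t)}})>0$, all weights $w_ly_{l2}^{(t)}$ in \eqref{eq:sumrate_linear} are strictly positive.
\item The optimal $\Tilde{\bm{\gamma}}$ lies on the boundary $\rho=1$.
\item The fixed point \eqref{eq:fixed_point} then gives $\gamma_l=z_l/(\B\mathbf{z})_l$, with $\mathbf{z}$ a positive eigenvector-type vector of a positive matrix, since $\B$ is positive.
\end{itemize}
An inductive lower bound on $y_{l2}^{(t)}$ in terms of the entries of $\B$ and $\mathbf{w}$ would then give a uniform lower bound on $\Tilde{\gamma}_l^{(t)}$. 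If this turns out to be delicate, the fallback is to restrict attention to a sublevel set $\{\Phi\ge\Phi^{(0)}\}$ intersected with $\mathcal{C}$ and argue that this set is bounded in the relevant directions.

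\textbf{Step 3: limit points are stationary.} Given compactness, take a limit point $(\Tilde{\bm{\gamma}}^*,\bm{y}^*)$. The $\bm{y}$-subproblem is strictly concave, since the entropy is strictly concave, so its maximizer is unique. By Tseng's result for two-block BCD \cite{tseng2001convergence} (two blocks need no uniqueness for the other block), every limit point is a blockwise maximizer, and hence a KKT point of \eqref{eq:sumrate_pro3}. The reason is that $\Phi$ is differentiable on the interior of $\mathcal{Y}$, the constraints are separable, and Slater's condition holds for $\mathcal{C}$.

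Eliminating $\bm{y}^*$ via \eqref{eq:solution_y} turns these KKT conditions into KKT conditions of \eqref{eq:sumrate_pro2}. Thus $\Tilde{\bm{\gamma}}^*$ is a stationary point, and the paper's ``locally optimal'' is to be read as this stationarity.

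\textbf{Main obstacle.} I expect Step~2 to be the hardest: guaranteeing compactness in the unbounded $\Tilde{\bm{\gamma}}$ variables, together with showing convergence of the whole sequence rather than only of subsequences. For the latter I would first try to invoke the monotone objective together with isolatedness of stationary values. Failing that, I would state the result for limit points, which is what Tseng's theorem directly delivers.
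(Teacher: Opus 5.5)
Your plan takes the same route as the paper. The appendix proof just asserts that both block maximizers are unique and cites Tseng. You are more careful, and the step you flag as the main obstacle (Step~2) is exactly what the paper skips. It is a genuine gap, and neither of your suggested fixes can close it, because the $\Tilde{\bm{\gamma}}$-iterates need not be bounded.

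Take $L=2$, $\G=\mathbf{1}\mathbf{1}^{\top}$, $n_1=n_2=\epsilon>0$, $\mathbf{m}=\mathbf{1}$, $\Bar{P}=1$, $\w=(\tfrac12,\tfrac12)^{\top}$. On the budget line $p_1+p_2=1$, the objective of \eqref{eq:sumrate_pro} equals $\log(1+\epsilon)-\tfrac12\log\big((p_1+\epsilon)(p_2+\epsilon)\big)$.
\begin{itemize}
\item It is maximized only at $\p=(1,0)$ and $\p=(0,1)$, where some $\Tilde{\gamma}_l=-\infty$.
\item The only KKT point of \eqref{eq:sumrate_pro2} corresponds to $\p=(\tfrac12,\tfrac12)$, and it minimizes the objective along that line.
\end{itemize}
So \eqref{eq:sumrate_pro2} has no local maximizer in $\mathbb{R}^2$ at all, and the lemma cannot hold as literally stated.

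Each BCD step strictly increases the sum rate unless the iterate is a KKT point. Hence every $\bm{y}^{(0)}$ with $y^{(0)}_{12}\neq y^{(0)}_{22}$ gives $\min_l\Tilde{\gamma}^{(t)}_l\to-\infty$. Concretely, near $\gamma_1=0$, Lemma~\ref{thm:thm1} gives $\gamma_1^{(t+1)}\approx\frac{\epsilon}{1+\epsilon}\gamma_1^{(t)}$. This rules out both of your fixes:
\begin{itemize}
\item No inductive lower bound on $y^{(t)}_{l2}$ can exist.
\item The superlevel-set fallback also fails. $\Phi$ is continuous on the closed set $\mathcal{C}\times\mathcal{Y}$, and its supremum $\tfrac12\log(1+1/\epsilon)$ is not attained, so no superlevel set $\{\Phi\ge\Phi^{(0)}\}$ is compact.
\end{itemize}

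Consequently, Tseng's theorem cannot be applied in the $\Tilde{\bm{\gamma}}$-variables as they stand. It only describes cluster points, and it needs bounded iterates for such points to exist. A correct version needs one of two things:
\begin{itemize}
\item an extra hypothesis that keeps the iterates away from the faces $\gamma_l=0$; or
\item a recast of the analysis in the $\bm{\gamma}$ (or $\p$) variables on the compact set $\{\bm{\gamma}\ge\mathbf{0}:\rho(\diag(\bm{\gamma})\B)\le 1\}$ of \eqref{eq:sumrate_pro_r}. There, limit points with some $\gamma_l=0$ are allowed, and you must show they are KKT points of \eqref{eq:sumrate_pro_r}.
\end{itemize}

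The same example shows that your other two caveats are substantive. From $y^{(0)}_{12}=y^{(0)}_{22}$, Algorithm~\ref{alg:alg1} stays forever at $\p=(\tfrac12,\tfrac12)$. This is a blockwise maximizer of \eqref{eq:sumrate_pro3} that is not locally optimal, so weakening ``locally optimal'' to stationarity is necessary. Convergence of the whole sequence likewise requires an argument beyond Tseng's cluster-point result.
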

\begin{IEEEproof}
    Please see the appendix for a detailed proof. 
\end{IEEEproof}
\begin{remark}
Algorithm \ref{alg:alg1} is highly efficient as it updates the variable block $\bm{y}^{(t)}$ with a closed-form solution and updates $\Tilde{\bm{\gamma}}^{(t)}$ with a geometrically fast iterative algorithm. We present Example \ref{emp:emp1} to illustrate the effectiveness of Algorithm \ref{alg:alg1}.
\end{remark}

\begin{example}\label{emp:emp1}
Consider a wireless system with a single base station that serves $L=3$ users. The downlink transmit powers are denoted by $p_1$, $p_2$, and $p_3$. The given maximum power is $\Bar{P} = 2.50\text{ mW}$, and the noise power is $\bm{\sigma} = [0.01,0.01,0.01]^\top\text{ mW}$. We configure the channel gain matrix as $\G = [0.18, 0.05, 0.59; 0.61, 2.67, 0.56; 2.89, 0.99, 1.29]$.
The weight vectors are $\mathbf{w}=[0.33, 0.33, 0.34]^\top$ and $\mathbf{m}=[1, 1, 1]^\top$. Then we use Algorithm \ref{alg:alg1} with various randomly chosen initial power assignments to solve the WSRM problem. Fig. \ref{fig:fig1} illustrates the evolution of the data rate for the users. The results show that Algorithm \ref{alg:alg1} tends to converge to local optima in some cases, with the potential to reach the global optimum if the initial point is carefully chosen.

As mentioned above, the methodological efficiency and inherent analytical insights encapsulated in \eqref{eq:sumrate_pro3} by the proposed BCD present an avenue for the integration of machine learning and domain-specific techniques. In the subsequent section, we investigate combining DRL techniques with the proposed BCD method to propose a DRL-based BCD algorithm. This algorithm enables the integration of both optimization-theoretic-based and machine network-based techniques to solve \eqref{eq:sumrate_pro3} more efficiently and effectively.
\begin{figure}
\centerline{\includegraphics[scale=0.6]{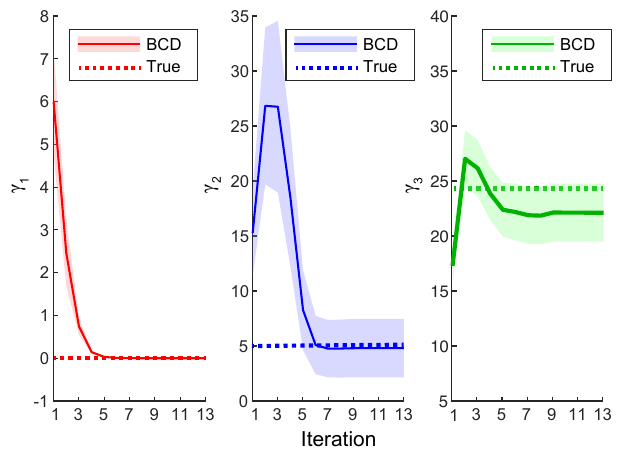}}
\captionsetup{font={footnotesize, stretch=1}, justification=raggedright}
\caption{Evolution of the data rate for $3$ users using the BCD method. \label{fig:fig1}}\vspace{-2mm}
\end{figure}

\end{example}


\section{DRL-based BCD for Downlink Sum-rate Maximization}\label{sec:RL-based BCD}
In this section, we present our DRL-based BCD algorithm, an interesting method that integrates the data-driven learning capabilities of DRL with the navigational and decision-making features of the optimization-theoretic-based BCD method. 

\begin{figure*}
\centerline{\includegraphics[scale=0.91]{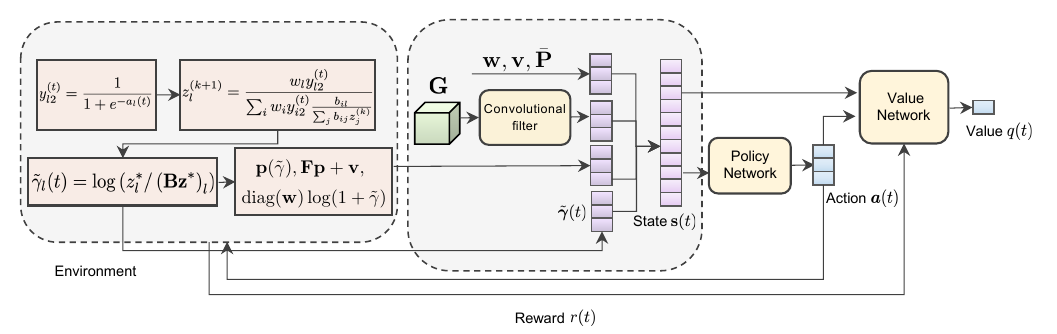}}
\captionsetup{font={footnotesize, stretch=1}, justification=raggedright}
\caption{The architecture of our DRL-based BCD algorithm for solving \eqref{eq:sumrate_pro3} includes state information from system parameters ($\G$, $\v$, $\mathbf{w}$, and $\Bar{P}$), the data rate $\Tilde{\bm{\gamma}}^{(t)}$ derived from the BCD algorithm for subproblems, and features of the problem's structure (including $F\p(\tilde{\bm{\gamma}})+\bm{\sigma}$ and $\textup{diag}(\mathbf{w})\log(1+\tilde{\gamma})$). The policy network outputs action $\bm{a}^{(t)}$ based on $\bm{s}^{(t)}$, and the agent receives reward $r^{(t)}$. Then the value network evaluates $\bm{a}^{(t)}$ in $\bm{s}^{(t)}$. This iteration continues until the agent converges to an optimal solution of \eqref{eq:sumrate_pro3}.  \label{fig:RL-BCD}}
\end{figure*}

\subsection{Overview of DRL for Continuous Control}\label{sec:overview}
DRL is a machine learning approach focused on training agents to make sequential decisions/actions in the environment to maximize cumulative rewards \cite{duan2016benchmarking}. Given that the optimal solution for \eqref{eq:sumrate_pro3} is achieved in a continuous space, we consider DRL methods for solving continuous control problems wherein the agent operates in continuous action and state spaces \cite{duan2016benchmarking,lillicrap2015continuous, fujimoto2018addressing,haarnoja2017reinforcement}.
One popular class of these DRL methods is the Actor-Critic framework \cite{lillicrap2015continuous, fujimoto2018addressing,haarnoja2017reinforcement}. This framework combines a learnable policy network (Actor) that learns a policy to take actions based on the observed state of the environment, and value networks (Critic) that estimate the expected future rewards to assess the actions.

At each time slot $t$, there are three crucial elements that the Actor-Critic framework considers: the state $\bm{s}^{(t)}$, the action $\bm{a}^{(t)}$, and the reward $r^{(t)}$. The state $\bm{s}^{(t)}$ provides the necessary environmental features for the agent to make decisions and take actions. The action $\bm{a}^{(t)}$ determines how the agent interacts with the environment. The reward $r^{(t)}$ is a feedback signal that the agent receives after performing action $\bm{a}^{(t)}$, which guides the learning process. Based on the given $\bm{s}^{(t)}$, the policy network, denoted as $\boldsymbol{\mu}\left(\boldsymbol{s}^{(t)}; \boldsymbol{\theta}\right)$ with the learning parameter $\boldsymbol{\theta}$, is typically implemented as a deep neural network model \cite{duan2016benchmarking}. It outputs a response to ${s}^{(t)}$ and, for continuous control tasks, directly leads to an action $\boldsymbol{a}^{(t)}$ instead of probabilities for potential actions.
The value network, referred to as $q(\bm{s}^{(t)},\bm{a}^{(t)};\bm{\omega})$ with learning parameter $\bm{\omega}$, is also a deep neural network model that approximates the action-value function \cite{duan2016benchmarking}. It produces a real-valued output that reflects the quality of the action $\bm{a}^{(t)}$ in $\bm{s}^{(t)}$. A higher value of $q(\bm{s}^{(t)},\bm{a}^{(t)};\bm{\omega})$ indicates a higher quality action. Therefore, the value network plays a critical role in accessing and refining the policy network. With the reward $r^{(t)}$ as feedback, the value network updates itself using the temporal difference (TD) method \cite{bertsekas2019reinforcement}.
During the training process, the value network assists in training the policy network. However, once the training is complete, the value network becomes unnecessary and can be discarded. Thus, during the decision-making process, the policy network takes control of the agent, guiding its actions based on the strategies it has learned.

Several efficient methods have been developed for training the policy and value networks in the Actor-Critic framework. These include the deep deterministic policy gradient (DDPG) \cite{lillicrap2015continuous}, the Twin Delayed Deep Deterministic policy gradient algorithm (TD3) \cite{fujimoto2018addressing}, and the soft Actor-Critic algorithm (SAC) \cite{haarnoja2017reinforcement}, etc. These methods typically update the policy network via gradient descent using Critic feedback, while the Critic updates value networks using the TD method. Additionally, techniques like target networks, policy noise, delayed and soft updates are commonly employed \cite{lillicrap2015continuous, fujimoto2018addressing, haarnoja2017reinforcement}.

\subsection{The DRL-based BCD model}
We now introduce our DRL-based BCD algorithm for solving \eqref{eq:sumrate_pro3}. 
The goal of our DRL-based BCD algorithm is to achieve the maximum sum-rate as fast as possible. We consider the BCD method as the agent and utilize the Actor-Critic framework of DRL to perform continuous control on the iteration updates (primarily on $\bm{y}(t)$ as discussed later) in the BCD method. To align this learning-based model with the WSRM problem, we present the three important elements discussed in section \ref{sec:overview} as follows:
\begin{itemize}
    \item State $\bm{s}^{(t)}$: 
    The state $\bm{s}^{(t)}$ consists of information from several components of the WSRM problem, which can be categorized into three primary parts. The first part includes the system parameters, such as $\G$, $\bm{\sigma}$, $\mathbf{w}$, and $\Bar{P}$. These parameters directly represent the system features of different problem instances and may change over time in practical applications. This aspect of state information has been considered in previous studies such as \cite{khan2020centralized}. The second part is the data rate $\Tilde{\bm{\gamma}}^{(t)}$, obtained from the $t$-th iteration of the BCD algorithm. This part represents one of our primary innovative contributions and holds greater significance as it leverages theoretical analysis of the problem's intrinsic structure. The third part of the state information incorporates elements such as $\p(\tilde{\bm{\gamma}})$, $F\p(\tilde{\bm{\gamma}})+\bm{\sigma}$, and $\textup{diag}(\mathbf{w})\log(1+\tilde{\gamma})$. This part provides a set of direct features of the problem's structure and has been explored in related studies \cite{rahmani2022deep,jin2025sum}. Thus, the state is designed as
    \begin{align}\label{eq:state}
        \bm{s}^{(t)}=\{&Conv\left(\G; \bm{\theta}_c\right),\bm{\sigma},\mathbf{w},\Bar{P},\Tilde{\bm{\gamma}}^{(t)},\nonumber\\
        &\p(\tilde{\bm{\gamma}}^{(t)}),\F\p(\tilde{\bm{\gamma}}^{(t)})+\bm{\sigma}, \textup{diag}(\mathbf{w})/(1+\tilde{\bm{\gamma}}^{(t)})\},
    \end{align}
    where $Conv\left(\;\cdot\;; \bm{\theta}_c\right)$ represents a convolutional filter \cite{goodfellow2016deep} and $\bm{\theta}_c$ is the set of learning parameters.     

    \;\;\;\;After executing the action $\bm{a}^{(t)}$, we update $\bm{s}^{(t)}$ to obtain $\bm{s}^{(t+1)}$. For a specific WSRM problem, the system parameters remain unchanged. The data rate $\Tilde{\bm{\gamma}}^{(t)}$ is updated in the following manner:
    \begin{align}\label{eq:iter1}
    \Tilde{\gamma}_l^{(t+1)} = \log(z_l^*/(\B\mathbf{z}^*)_l),
\end{align}
where $z_l^*$ is the convergence of the iteration:
\begin{align}\label{eq:iter2}
        z_l^{(k+1)} =  \displaystyle{\frac{w_ly_{l2}^{(t)}}{\sum_i w_iy_{i2}^{(t)} \frac{b_{il}}{\sum_j b_{ij}z_j^{(k)}}}},
    \end{align}
where $ y_{l2}^{(t)}$ a sigmoid function of the action $\bm{a}^{(t)}$, i.e.,
\begin{align}\label{eq:policy_y}
        y_{l2}^{(t)} = \frac{1}{1+e^{  -a_l^{(t)}}}.
    \end{align}
    \textcolor{black}{
    With the obtained $\Tilde{\gamma}_l^{(t+1)}$, the problem structure feature $\p(\tilde{\bm{\gamma}}^{(t)}),\F\p(\tilde{\bm{\gamma}}^{(t)})+\bm{\sigma} $ and $ \textup{diag}(\mathbf{w})/(1+\tilde{\bm{\gamma}}^{(t)})$ can be updated accordingly.}

    \;\; The inclusion of $\Tilde{\bm{\gamma}}^{(t)}$ in $\bm{s}^{(t)}$  is, in fact, one of the most important steps in integrating DRL and BCD in this paper. It enables the environment to utilize the analytical structure of the WSRM problem, specifically the iteration of $\Tilde{\bm{\gamma}}^{(t)}$ in BCD, in response to the agent's actions. This facilitates the transmission of these analytical structures to the policy network. Moreover, the iteration of $\Tilde{\bm{\gamma}}^{(t)}$ converges exponentially fast, enabling the environment to deliver timely responses promptly.
    \item Action $\bm{a}^{(t)}$: As shown in \eqref{eq:policy_y}, we use the policy network to determine the action for $\bm{y}^{(t)}$ in \eqref{eq:sumrate_pro3}. This differs from most existing works such as \cite{waraiet2023robust}, which typically focus on the action on the power $\p$ in \eqref{eq:sumrate_pro}. In fact, training the action on $\bm{y}^{(t)}$ greatly improves accuracy and reduces training difficulties in our DRL-based BCD algorithm. That is because our algorithm only needs to find favorable starting points for the iteration of the BCD method, which typically forms a continuous set. As long as the policy network's output falls within this set, our algorithm can effectively reach a globally optimal solution. However, when training actions on $\p$, it is challenging to approximate the globally optimal solution with absolute precision.  Given that $\bm{y}^{(t)}$ is a continuous variable, we design the policy network $\bm{\mu} (\bm{s}^{(t)};\bm{\theta})$ for continuous control. For a given state $\bm{s}^{(t)}$, the action $\bm{a}^{(t)}$ is obtained by
    \textcolor{black}{
    \begin{align}\label{eq:policy}
    \bm{a}^{(t)} &= \bm{\mu} (\bm{s}^{(t)};\bm{\theta}) + \bm{\sigma} \nonumber\\
    & = \bm{\mathcal{F}}_D\left(\cdots\bm{\mathcal{F}}_2\left(\bm{\mathcal{F}}_1\left(\bm{s}^{(t)};\bm{\theta}_1\right);\bm{\theta}_2\right)\cdots;\bm{\theta}_D\right)+ \bm{\sigma},
\end{align}
where $\bm{\theta}=\left\{\bm{\theta}_c,\bm{\theta}_1,\bm{\theta}_2, \cdots,\bm{\theta}_D\right\}$ is the set of learning parameters. The function $\bm{\mathcal{F}}_d(\cdot;\bm{\theta}_d)$ represents a compositing function that consists of linear combinations and activation functions (e.g., the ReLU function) with parameter $\bm{\theta}_d$ for the $d$-th hidden layer in a fully connected deep neural network model.} Additionally, $\bm{\sigma}$ is the noise that follows a continuous uniform distribution, i.e., $\bm{\sigma} \sim \bm{U}(0, c)$ with $c$ as an adapted positive hyperparameter.
    \item Reward $r^{(t)}$: We reward DRL-based BCD algorithms that converge quickly to the optimal solution $\Tilde{\bm{\gamma}}^*$ of \eqref{eq:sumrate_pro3} and penalize those that do not. A natural penalty would be the cumulative differences between the predicted objective values and the optimal objective value over all iterations. Then at time $t$, the reward is defined as
    \begin{align}\label{eq:reward}
       r^{(t)}(\bm{s}^{(t)})= -\sum_{l=1}^{L}w_l\left(\log\left(\frac{1+e^{\Tilde{\gamma}^{(t)}_l}}{1+e^{\Tilde{\gamma}^*_l}}\right)\right)^2.
\end{align}
    The absolute value of the cumulative reward is larger when the DRL-based BCD algorithm converges to the optimal solution of \eqref{eq:sumrate_pro3} slowly and is smaller otherwise. 
\end{itemize}

To expedite the DRL-based BCD algorithm in achieving the maximum sum-rate quickly, we need to maximize the expected cumulative reward. Specifically, for a trajectory with $T$ steps, the goal is to maximize the following $J(\bm{\theta})$:

\begin{align}
    J(\bm{\theta})\!\!=\!\!\sum_{\bm{s}^{(t)}\in \bm{\mathcal{S}}}\!\!\mathcal{p}(\bm{s}^{(0)})\prod_{t=0}^{T-1}\mathcal{p}(\bm{s}^{(t+1)}|\bm{\mu} (\bm{s}^{(t)};\bm{\theta}) )\sum_{t=0}^{T-1}\beta^tr^{(t)}(\bm{s}^{(t)}),
\end{align}
where $\bm{s}^{(0)}$ is the randomly sampled started state with probability $\mathcal{p}(\bm{s}^{(0)})$, $\bm{\mathcal{S}}$ represents the finite-size set of possible $\bm{s}^{(t)}$ under the assumption of a finite number of local optima for \eqref{eq:sumrate_pro3}, $\mathcal{p}(\bm{s}^{(t+1)}|\bm{\mu} (\bm{s}^{(t)};\bm{\theta})$ is the state transition probability to $\bm{s}^{(t+1)}$ from  $\bm{s}^{(t)}$ under policy $\bm{\mu} (\bm{s}^{(t)};\bm{\theta})$, and $\beta$ is a discount factor. 
As shown in Fig. \ref{fig:RL-BCD}, the training process proceeds as follows: \textcolor{black}{At time slot $t$, the state $\bm{s}^{(t)}$ concatenates the system parameters and environment-provided data rate  $\Tilde{\bm{\gamma}}^{(t)}$}. Based on $\bm{s}^{(t)}$, the agent performs an action $\bm{a}^{(t)}$ using the policy network in  \eqref{eq:policy} and receives a reward $r^{(t)}$. The action $\bm{a}^{(t)}$ is then assessed by a value network $q(\bm{s}^{(t)},\bm{a}^{(t)};\bm{\omega})$, which is a fully connected deep neural network with learning parameters $\bm{\omega}$, defined as follows:
\textcolor{black}{
\begin{align}\label{eq:value-net}
    &q(\bm{s}^{(t)},\bm{a}^{(t)};\bm{\omega}) \nonumber\\
=&\bm{\mathcal{F}}_D\left(\cdots\bm{\mathcal{F}}_2\left(\bm{\mathcal{F}}_1\left(\bm{s}^{(t)},\bm{a}^{(t)};\bm{\omega}_1\right);\bm{\omega}_2\right)\cdots;\bm{\omega}_D\right),
\end{align}
where $\bm{\omega}=\left\{\bm{\omega}_1,\bm{\omega}_2, \cdots,\bm{\omega}_D\right\}$ is the set of learning parameters, and the function $\bm{\mathcal{F}}_d(\cdot;\bm{\omega}_d)$ also represents a compositing function that consists of linear combinations and activation functions (e.g., the ReLU function) with parameter $\bm{\omega}_d$ for the $d$-th hidden layer in a fully connected deep neural network model.}
Then the policy network is refined according to the assessment, and simultaneously the value network updates itself based on the reward $r^{(t)}$. More information on this updating process for training the policy and value networks is provided in the following subsection. Then the environment responds to  $\bm{a}^{(t)}$ by updating $\Tilde{\bm{\gamma}}^{(t)}$ to $\Tilde{\bm{\gamma}}^{(t+1)}$ through convergence iterations \eqref{eq:iter1} to \eqref{eq:iter2}, thereby updating the state to $\bm{s}^{(t+1)}$. With $\bm{s}^{(t+1)}$, the agent generates and executes a new action, and then both the policy and value networks are concurrently updated. This iterative cycle continues until the reward $r^{(t)}$ approaches a tolerable error $\varepsilon$ or the maximum iteration threshold is reached. The whole detailed procedure is outlined in Algorithm \ref{alg:alg2}. Moreover, in Appendix B, we show the convergence of the DRL-based BCD in Algorithm \ref{alg:alg2}.

\begin{algorithm}
\SetAlgoLined
\KwIn{The system parameters $\G$, $\n,\mathbf{w},\Bar{P}$ and randomly initialized   $\Tilde{\bm{\gamma}}(0)$.}
\KwOut{The  optimal parameter $\bm{\theta}^*$ and $\bm{\omega}^*$.} 
1. Predict the action $\bm{a}^{(t)}$ with policy network $\bm{\mu} (\bm{s}^{(t)};\bm{\theta})$ in \eqref{eq:policy} and assess $\bm{a}^{(t)}$ with the value network $q(\bm{s}^{(t)},\bm{a}^{(t)};\bm{\omega})$ in \eqref{eq:value-net}.\\
2. Update the state $\bm{s}^{(t+1)}$ with $\Tilde{\bm{\gamma}}^{(t+1)}$:
\begin{align*}
    \Tilde{\gamma}_l^{(t+1)} = \log(z_l^{(t+1)}/(\B\mathbf{z}^{(t+1)})_l),
\end{align*}
where $z_l^{(t+1)}$ is the convergence of the iteration:
\begin{align*}
        z_l^{(k+1)} =  \frac{w_ly_{l2}^{(t)}}{\sum_i w_iy_{i2}^{(t)} \frac{b_{il}}{\sum_j b_{ij}z_j^{(k)}}},
    \end{align*}
and $ y_{l2}^{(t)}$  responses to the action $\bm{a}^{(t)}$:
\begin{align*}
        y_{l2}^{(t)} = \frac{e^{ a_l(\bm{s}^{(t)},\bm{\theta})}}{1+e^{ a_l(\bm{s}^{(t)},\bm{\theta})}}.
    \end{align*}

3. Update the reward $r^{(t)}$:
\begin{align*}
       r^{(t)}= -\sum_{l=1}^{L}w_l\left(\log\left(\frac{1+e^{\Tilde{\gamma}^{(t+1)}_l}}{1+e^{\Tilde{\gamma}^*_l}}\right)\right)^2.
\end{align*}
\!\!\!4. Predict the action with the policy network $\bm{\mu} (\bm{s}^{(t+1)};\bm{\theta})$ and assess the action with the value network $\bm{q} (\bm{s}^{(t+1)},{\bm{a}}^{(t+1)};\bm{w})$.\\
5. Update the policy network $\bm{\mu} (\bm{s}^{(t)};\bm{\theta})$ and the value network $\bm{q} (\bm{s}^{(t)},{\bm{a}}^{(t)};\bm{w})$, as presented in Section \ref{sec:trainign}.\\
6. Repeat Step 1-5 until the reward $r^{(t)}$ converges to a tolerable error $\varepsilon$.
\caption{DRL-based BCD for Downlink WSRM}
\label{alg:alg2}
\end{algorithm}

\begin{remark}
    The main features of Algorithm \ref{alg:alg2} lie in the state transition (by incorporating the iterations \eqref{eq:iter1}-\eqref{eq:iter2} to obtain $\Tilde{\bm{\gamma}}^{(t)}$ in $\bm{s}^{(t)}$) and the agent's actions on $\bm{y}^{(t)}$ (to update the construction of subproblems \eqref{eq:sumrate_linear} of \eqref{eq:sumrate_pro3}). 
    The latter enhances the robustness of the algorithm. 
    Regarding the former, based on the iterations in \eqref{eq:iter1}-\eqref{eq:iter2} to reach the optimal solution for subproblem \eqref{eq:sumrate_linear} of \eqref{eq:sumrate_pro3}, the output, $\Tilde{\bm{\gamma}}^*$ obtained from the DRL-based BCD in the decision stage, not only meets the conditions in \eqref{eq:sumrate_pro3} but also has the potential to achieve a $100\%$ accurate optimal solution.
    Essentially, the DRL-based BCD algorithm leverages the learning capabilities of DRL to obtain superior initial points and overcome the trap of local optima, facilitating BCD's convergence towards the global optimum.
\end{remark}

\subsection{Training Process}\label{sec:trainign}
To train the policy and value network in the DRL-based BCD algorithm, we employ the training techniques of DDPG and TD3 and refer to the model as the DDPG-based BCD algorithm and TD3-based BCD algorithm, respectively.

\begin{itemize}
    \item DDPG-based BCD: In Step 4 of Algorithm \ref{alg:alg2}, we integrate various DDPG techniques \cite{lillicrap2015continuous}, including the replay buffer, independent target networks $\hat{\bm{\mu}} (\bm{s}^{(t)};\hat{\bm{\theta}})$ and $\hat{q}(\bm{s}^{(t)},\bm{a}^{(t)};\hat{\bm{\omega}})$, noise-based exploration with  $\bm{\sigma}\sim \bm{U}(0, c)$, and soft parameter updates with $\tau\in(0,1)$, to train the policy and value network. The replay buffer diversifies training data and reduces variance. Using target networks minimizes sudden weight changes and prevents oscillations. Adding noise to the policy promotes exploration and avoids local optima. Soft parameter updates prevent instability and improve training stability for more efficient learning. 
When sampling a random minibatch of $N$ transitions $\left(\bm{s}^{(t)}, \bm{a}^{(t)}, r^{(t)}, \bm{s}^{(t+1)}\right)$ obtained by Step 1-3 in Algorithm \ref{alg:alg2}, let $\mathcal{V}^{(t)}=r^{(t)}+\gamma \hat{q}\left(\bm{s}^{(t+1)}, \hat{\mu}\left(\bm{s}^{(t+1)} ; \hat{\theta}\right) ; \hat{\bm{\omega}}\right)$. Then the value network can be updated by minimizing the loss: 
\begin{align}\label{eq:update_value}
    \mathcal{L}=\frac{1}{N} \sum_i\left(\mathcal{V}^{(t)}-q\left(\bm{s}^{(t)}, \bm{a}^{(t)} ; \bm{\omega}\right)\right)^2.
\end{align}
For updating the policy network, the sampled policy gradient can be used, i.e.,
\begin{align}\label{eq:update_policy}
    \nabla_{\bm{\theta}} J \approx \frac{1}{N} \sum_i \nabla_{\bm{a}} q\left(\bm{s}^{(t)}, \bm{a}^{(t)} ; \bm{\omega}\right) \nabla_{\bm{\theta}} \mu\left(\bm{s}^{(t)} ; \bm{\theta}\right).
\end{align}
Then the target networks are updated by
$$
\begin{aligned}
\hat{\bm{\omega}} & \leftarrow \tau \bm{\theta}+(1-\tau) \hat{\bm{\omega}}, \\
\hat{\theta} & \leftarrow \tau \bm{\theta}+(1-\tau) \hat{\theta}.
\end{aligned}
$$

\item TD3-based BCD: Besides DDPG, we also use the TD3 technique to train the policy network and value network in Step 4 of Algorithm \ref{alg:alg2}. Similarly, based on the transitions $\left(\bm{s}^{(t)}, \bm{a}^{(t)}, r^{(t)}, \bm{s}^{(t+1)}\right)$, TD3 utilizes the TD method to update the value network and employs the feedback from value network to update the policy network through gradient descent, as shown in \eqref{eq:update_value} and \eqref{eq:update_policy}. To address the problem of overestimation, TD3 introduces two sets of value networks to mitigate estimation bias. Meanwhile, both the value networks and the policy network maintain their target networks, ensuring stable updates during training. Moreover, the policy networks are updated less frequently than the value networks, which improves the stability of the critic by preventing inaccurate estimations from adversely affecting policy updates. 
    
\end{itemize}


\subsection{Complexity Analysis}
The computational complexity of the DRL-based BCD algorithm includes both the model training and prediction stages. During training, a substantial amount of data is input into the DRL algorithm until the model achieves specific performance criteria. In the prediction stage, the trained policy network is used for real-time predictions, involving action updates and state transitions. Assuming there are $L$ users in the wireless network and the convolutional filter has dimensions of $K\times K$, the computational complexity of the convolutional layer is $O((L-K)^2\times K^2)$. Subsequently, the computational complexity of the policy network is $O((L-K)^2\times K^2+L)$. According to Lemma \ref{thm:thm1}, the complexity of updating the state is $O(\log L)$. Therefore, the computational complexity of updating the actions grows to $O((L-K)^2\times K^2+L+\log L)$ per time as the wireless network size increases.
In the model training stage, the computational complexity involves updating the parameters of the policy and value networks, as well as state transitions. Assuming there are $m$ parameters in the policy network and $n$ parameters in the value network, and considering the action dimension as $L$ with a state update complexity of $O(\log L)$, the update complexities for the DRL-based BCD model become $O(mL+\log L+n)$ per time as the wireless network size expands.

\section{Application Extension for Downlink WSRM
Beamforming}\label{sec:extend}
We now demonstrate the strong extensibility of the DRL-based BCD framework by applying it to the scenario of joint beamforming for downlink WSRM. Specifically, the $\mathbb{U}$ and $\mathbb{V}$ are not fixed in \eqref{eq:sinr_fixed}; rather, they are joint variables to be optimized for maximizing the downlink WSRM problem. We first consider the case where the transmit beamformer $\mathbb{U}$ is not fixed and the receive beamformer $\mathbb{V}$ is fixed. Then the downlink WSRM problem is given by
\begin{align}\label{eq:sumrate_beam}
\textup{maximize} \;\;\; &\sum_{l=1}^{L}w_l\log(1+\textup{SINR}_l\left(\mathbf{p}, \mathbb{U}\right)) \nonumber\\
\textup{subject to}  \;\;\;&\mathbf{m}^{\top}\p\leq \Bar{P},\nonumber \\
\;\;\;&\p\geq \bm{0},\nonumber\\
\;\;\;&\|\mathbf{u}_l\|=1, \; \forall l,\\
\textup{variables:} \;\;\;&\p, \mathbb{U}.\nonumber
\end{align}
The virtual dual network of this system comprises the channels that are the transposes of the channels in the primal network. The transmit beamforming vectors become the dual receive beamforming vectors. 
Then the SINR of the virtual dual network takes the form of
\begin{align}\label{eq:sinr2}
    \operatorname{SINR}'_l\left(\mathbf{p'},\mathbb{U}\right) &=\frac{\left|\mathbf{u}_l^{\dagger} \mathbf{H}^{\dagger}_l \mathbf{v}_l\right|^2p'_l}{\sum_{j=1,j \neq l}^L \left|\mathbf{u}_l^{\dagger} \mathbf{H}^{\dagger}_j \mathbf{v}_j\right|^2p'_j + n'_l}\\\nonumber
    &= \frac{G_{ll}p'_l}{\sum_{j=1,j \neq l}^L G_{jl}p'_j + n'_l}.
\end{align}
The weighted WSRM problem of the virtual dual network is
\begin{align}\label{eq:sumrate_dual}
\textup{maximize} \;\;\; &\sum_{l=1}^{L}w_l\log(1+\textup{SINR}'_l\left(\mathbf{p}', \mathbb{U}\right)) \nonumber\\
\textup{subject to}  \;\;\;&\mathbf{m}'^{\top}\p'\leq \Bar{P}',\nonumber \\
\;\;\;&\p'\geq \bm{0},\nonumber\\
\;\;\;&\|\mathbf{u}_l\|=1, \; \forall l,\\
\textup{variables:} \;\;\;&\p', \mathbb{U}.\nonumber
\end{align}
Then with the uplink-downlink duality mapping $\{\mathbf{m}'^{\top},\n',\Bar{P}'\} \leftrightarrow \{\mathbf{n}^{\top},\mathbf{m}^{\top},\Bar{P}\}$, the downlink WSRM problem \eqref{eq:sumrate_beam} is equivalent to \eqref{eq:sumrate_dual}. This result has been shown in Theorem 2 in \cite{lai2014beamforming}. Indeed, it can also be seen from that the constraint $\rho\left(\operatorname{diag}(\bm{\gamma})\left(\mathbf{F}+\left(1 / \Bar{P}\right) \bm{\sigma}\mathbf{m}^{\top}\right)\right) \leq 1$ is equivalent to $\rho\left(\operatorname{diag}(\bm{\gamma})\left(\mathbf{F}^{\top}+\left(1 / \Bar{P}\right) \mathbf{m}\bm{\sigma}^{\top}\right)\right) \leq 1$ due to the fact that $\rho(\mathbf{X}\mathbf{Y})=\rho(\mathbf{Y}\mathbf{X})$ and $\rho(\mathbf{X}^{\top})=\rho(\mathbf{X})$. With the duality mapping and the same analysis process as in Section \ref{sec:BCD}, we have the following lemma.
\begin{lemma}\label{lem:lem3}
    With the uplink-downlink duality mapping $\{\mathbf{m}'^{\top},\n',\Bar{P}'\} \leftrightarrow \{\mathbf{n}^{\top},\mathbf{m}^{\top},\Bar{P}\}$ and the given $\mathbf{u}^*_l$ in \eqref{eq:sumrate_dual}, the BCD algorithm in Algorithm 1 also solves \eqref{eq:sumrate_dual}, with the matrix $\B$ being substituted by $\B'=\mathbf{F}^{\top}+\left(1 / \Bar{P}\right) \mathbf{m}\bm{\sigma}^{\top}$.
\end{lemma}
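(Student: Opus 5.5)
With $\mathbb{U}=\{\mathbf{u}^*_l\}$ fixed, I would reduce \eqref{eq:sumrate_dual} to a power-control problem with exactly the structure of \eqref{eq:sumrate_pro}. Then I would rerun the chain \eqref{eq:sumrate_pro}$\to$\eqref{eq:sumrate_pro_r}$\to$\eqref{eq:sumrate_pro2}$\to$\eqref{eq:sumrate_pro3} and invoke Lemma \ref{thm:thm1} and Lemma \ref{lem:lem2} with $\B$ replaced by $\B'$.

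\textbf{Step 1: rewrite the dual SINR.} With the beamformers fixed, the gains $G_{ll}$ and $G_{jl}$ are constants. Dividing numerator and denominator of \eqref{eq:sinr2} by $G_{ll}$ gives
\begin{align*}
\operatorname{SINR}'_l(\mathbf{p}')=\frac{p'_l}{\sum_{j\neq l}(G_{jl}/G_{ll})p'_j+n'_l/G_{ll}}.
\end{align*}
This is the downlink SINR form with cross-gain matrix $\F'$, where $F'_{lj}=G_{jl}/G_{ll}$ for $j\neq l$, and noise vector $\bm{\sigma}'$. Under the duality mapping $\n'\leftrightarrow\mathbf{m}$, $\mathbf{m}'\leftrightarrow\n$ and $\Bar P'=\Bar P$, I expect $\F'$ to equal $\F^\top$ after a diagonal similarity, with noise and power-weight vectors exchanged accordingly. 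Irreducibility of $\F'$ follows from that of $\F$, since transposition and diagonal scaling preserve it.

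\textbf{Step 2: spectral-radius constraint.} Applying the argument of \cite{tan2015wireless} to the reformulated problem gives the constraint $\rho\big(\operatorname{diag}(\bm\gamma)\B'\big)\le1$. I would then show this set coincides with the one in \eqref{eq:sumrate_pro_r}, using the identities noted before the lemma: $\rho(\mathbf{X}\mathbf{Y})=\rho(\mathbf{Y}\mathbf{X})$, $\rho(\mathbf{X}^\top)=\rho(\mathbf{X})$, and invariance of $\rho$ under diagonal similarity. Concretely, $\operatorname{diag}(\bm\gamma)\B'$ and $\operatorname{diag}(\bm\gamma)\B^\top$ (or $\B^\top\operatorname{diag}(\bm\gamma)$) have the same spectrum. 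The objective $\sum_l w_l\log(1+\gamma_l)$ is unchanged, so the dual problem in $\bm\gamma$ has the same form as \eqref{eq:sumrate_pro_r} with $\B$ replaced by $\B'=\F^\top+(1/\Bar P)\mathbf{m}\bm\sigma^\top$.

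\textbf{Step 3: rerun the BCD analysis.} The change of variables $\tilde{\bm\gamma}=\log\bm\gamma$ and the entropy relaxation \eqref{eq:entropy} do not depend on $\B$, so the $\bm y$-update \eqref{eq:solution_y} is identical. For the $\tilde{\bm\gamma}$-subproblem \eqref{eq:sumrate_linear} with $\B'$, Kingman's log-convexity \cite{kingman1961convexity} still applies, because $\B'$ is nonnegative and irreducible. The proof of Lemma \ref{thm:thm1} uses only these properties of $\B$: nonnegativity, irreducibility (via the term $\bm\sigma\mathbf{m}^\top$), and the Perron--Frobenius eigenvector characterization. Its conclusion therefore holds verbatim with entries $b'_{il}$, giving the fixed point \eqref{eq:fixed_point} and the iteration \eqref{eq:iteration}. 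Lemma \ref{lem:lem2} then gives convergence to a local optimum.

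\textbf{Main obstacle.} The hardest step is Step 2. The bookkeeping of the normalizations by $G_{ll}$, and of which vectors ($\bm\sigma$ versus $\n$, $\mathbf{m}$ versus $\mathbf{m}'$) appear under the duality mapping, must produce exactly $\B'$ rather than a diagonally scaled variant. I would first verify $\F'=\operatorname{diag}(G_{ll})^{-1}\G_{\rm off}^\top$ and $\F^\top=\G_{\rm off}^\top\operatorname{diag}(G_{ll})^{-1}$, where $\G_{\rm off}$ is $\G$ with its diagonal set to zero. These two matrices are diagonally similar. I would then absorb the scaling into the rank-one term and use the spectrum-preserving identities to conclude.
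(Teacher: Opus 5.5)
Your proposal is correct and follows essentially the paper's route. The paper cites Theorem~2 of \cite{lai2014beamforming} for the duality, notes that the constraint sets coincide via $\rho(\mathbf{X}\mathbf{Y})=\rho(\mathbf{Y}\mathbf{X})$ and $\rho(\mathbf{X}^{\top})=\rho(\mathbf{X})$, and then reruns the analysis of Section~\ref{sec:BCD} with $\B'$. Your explicit check that the dual network's matrix is $\operatorname{diag}(G_{11},\dots,G_{LL})^{-1}\,\B'\,\operatorname{diag}(G_{11},\dots,G_{LL})$ is a self-contained substitute for the citation; note also that $\B'=\B^{\top}$ exactly, so your Step~2 is immediate.
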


Meanwhile, since the $\operatorname{SINR}'_l\left(\mathbf{p'},\mathbb{U}\right) $ in \eqref{eq:sinr2} is a function of $\mathbb{U}$ but only depends on $\mathbf{u}_l$, the optimization of the receive beamforming in the virtual dual network is decoupled from one another. Then the $l$-th receive beamforming is optimized simply by maximizing its rate $r_l$, which is equivalent to the maximization of $\operatorname{SINR} \gamma_l^{\prime}$ and the minimization of the mean-squared error (MSE), leading to the LMMSE beamforming vector as the optimal solution, i.e.,
\begin{align}\label{eq:beamforming}
    \mathbf{u}_l^{\dagger \star}(\p'^*)=\frac{\left(\sum_{j \neq l} p'^*_{j} \mathbf{H}_j^{\dagger} \mathbf{v}_j \mathbf{v}_j^{\dagger} \mathbf{H}_j+m_l \mathbf{I}\right)^{-1}\mathbf{H}_l^{\dagger} \mathbf{v}_l}{\left\|\left(\sum_{j \neq l} p'^*_{j} \mathbf{H}_j^{\dagger} \mathbf{v}_j \mathbf{v}_j^{\dagger} \mathbf{H}_j+m_l \mathbf{I}\right)^{-1}\mathbf{H}_l^{\dagger} \mathbf{v}_l\right\|}.
\end{align}

When the receive beamformer $\mathbb{V}$ is not fixed and the transmit beamformer $\mathbb{U}$ is fixed. Then the downlink WSRM problem can be expressed as
\begin{align}\label{eq:sumrate_beam2}
\textup{maximize} \;\;\; &\sum_{l=1}^{L}w_l\log(1+\textup{SINR}_l\left(\mathbf{p}, \mathbb{V}\right)) \nonumber\\
\textup{subject to}  \;\;\;&\mathbf{m}^{\top}\p\leq \Bar{P},\nonumber \\
\;\;\;&\p\geq \bm{0},\nonumber\\
\;\;\;&\|\mathbf{v}_l\|=1, \; \forall l,\\
\textup{variables:} \;\;\;&\p, \mathbb{V}.\nonumber
\end{align}

Similarly, since the $\operatorname{SINR}_l\left(\mathbf{p},\mathbb{V}\right)$ is a function of $\mathbb{V}$ but also depends solely on $\mathbf{v}_l$, the optimal solution can be expressed as the LMMSE beamforming vector as
\begin{align}\label{eq:beamforming_v}
    &\mathbf{v}_l^{\dagger}(\p^*)\nonumber\\
    =&\frac{\left(\sum_{j \neq l} p_{j}\mathbf{H}_l\mathbf{u}_j  \mathbf{u}_j^{\dagger} \mathbf{H}_j^{\dagger}+\sigma_l \mathbf{I}\right)^{-1}\mathbf{H}_l \mathbf{u}_l}{\left\|\left(\sum_{j \neq l} p_{j}\mathbf{H}_l\mathbf{u}_j \mathbf{u}_j^{\dagger} \mathbf{H}^{\dagger}_l+\sigma_l \mathbf{I}\right)^{-1}\mathbf{H}_l\mathbf{u}_l\right\|}.
\end{align}

By integrating the framework of the DRL-based BCD as outlined in Algorithm \ref{alg:alg2}, along with Lemma \ref{lem:lem3}, \eqref{eq:beamforming} and \eqref{eq:beamforming_v}, we can readily apply the DRL-based BCD model to tackle the downlink WSRM beamforming. The architecture is presented in Algorithm \ref{alg:alg3}. While the configuration of the policy network and value network remains the same structures as in Algorithm \ref{alg:alg2}, the state $\bm{s}^{(t)}$ is updated as

\begin{align}\label{eq:state2}
     \bm{s}^{(t+1)}=\Big\{&Conv\left(\mathbf{G}^{(t+1)}; \bm{\theta}_c\right),\bm{\sigma},\mathbf{w},\Bar{P},\nonumber\\
     &\Tilde{\bm{\gamma}}^{(t+1)},\mathbb{U}(\bm{\Tilde{\gamma}'^{(t+1)}}),\mathbb{V}(\bm{\Tilde{\gamma}^{(t+1)}}),\nonumber\\
&\p(\tilde{\bm{\gamma}}^{(t+1)}),\F\p(\tilde{\bm{\gamma}}^{(t+1)})+\bm{\sigma}, \textup{diag}(\mathbf{w})/(1+\tilde{\bm{\gamma}}^{(t+1)})\Big\}.
\end{align}

In the state $\bm{s}^{(t+1)}$, we have
\begin{align*}
    {\Tilde{\gamma}}_l'^{(t+1)} = \log({z'}_l^{(t+1)}/(\B'^{(t+1)}\mathbf{z}'^{(t+1)})_l),
\end{align*}
in which $\B'^{(t)}\!=\!\F^{^{(t)}\top}\!+\!\frac{1}{\bar{P}}\mathbf{m}\bm{\sigma}^{\top}$ with $F^{(t)}_{lj}\!=\!|\mathbf{v}_l^{\dagger(t)}\mathbf{H}_l\mathbf{u}^{(t)}_j|^2 /|\mathbf{v}_l^{\dagger(t)}\mathbf{H}_l \mathbf{u}^{(t)}_l|^2$ for $j\neq l$ and $F^{(t+1)}_{ll}\!=0$, and  ${z'}_l^{(t+1)}$ is the convergence of the iteration:
\begin{align*}
        {z'}_l^{(k+1)} =  \displaystyle{\frac{w_ly_{l2}^{(t)}}{\sum_i w_iy_{i2}^{(t)} \frac{{b'^{(t)}}_{il}}{\sum_j {b'}^{(t)}_{ij}{z'}_j^{(k)}}}},
    \end{align*}
where $ y_{l2}^{(t)}$  responses to the action $\bm{a}^{(t)}$:
\begin{align*}
        y_{l2}^{(t)} =\displaystyle{ \frac{e^{ a_l(\bm{s}^{(t)},\bm{\theta})}}{1+e^{ a_l(\bm{s}^{(t)},\bm{\theta})}}}.
    \end{align*}
The transmit beamforming vector is updated as
\begin{align*}
    &\mathbf{u}_l(\bm{\Tilde{\gamma}'^{(t+1)}})\\
    =&\frac{\left(\sum_{j \neq l} p'_{j}(\bm{\Tilde{\gamma}'^{(t+1)}}) \mathbf{H}_j^{\dagger} \mathbf{v}^{(t)}_j \mathbf{v}^{\dagger(t)}_j \mathbf{H}_j+m_l \mathbf{I}\right)^{-1}\mathbf{H}_l^{\dagger} \mathbf{v}^{(t)}_l}{\left\|\left(\sum_{j \neq l} p'_{j}(\bm{\Tilde{\gamma}'^{(t+1)}}) \mathbf{H}_j^{\dagger} \mathbf{v}^{(t)}_j \mathbf{v}^{\dagger(t)}_j \mathbf{H}_j+m_l \mathbf{I}\right)^{-1}\mathbf{H}_l^{\dagger} \mathbf{v}^{(t)}_l\right\|}.
\end{align*}

Also, we have that
\begin{align*}
    {\Tilde{\gamma}}_l^{(t+1)} = \log({z}_l^{(t+1)}/(\B^{(t+1)}\mathbf{z}^{(t+1)})_l),
\end{align*}
with $\B^{(t+1)}\!=\!\F^{(t+1)}\!+\!\frac{1}{\bar{P}}\bm{\sigma}\mathbf{m}^{\top}$ where $F^{(t+1)}_{lj}\!=\!|\mathbf{v}^{\dagger(t)}_l\mathbf{H}_l\mathbf{u}^{(t+1)}_j| ^2/|\mathbf{v}^{\dagger(t)}_l\mathbf{H}_l\mathbf{u}^{(t+1)}_l|^2$ for $j\neq l$ and $F_{ll}^{(t+1)}=0$,  and  ${z}_l^{(t+1)}$ is the convergence of the iteration
\begin{align*}
        {z}_l^{(k+1)} =  \displaystyle{\frac{w_ly_{l2}^{(t)}}{\sum_i w_iy_{i2}^{(t)} \frac{{b}^{(t+1)}_{il}}{\sum_j {b}^{(t+1)}_{ij}{z}_j^{(k)}}}},
    \end{align*}
and the receive beamforming vector is updated as
\begin{align*}
    &\mathbf{v}_l(\bm{\Tilde{\gamma}^{(t+1)}})\\
    =&\frac{\left(\sum_{j \neq l} p_{j}(\bm{\Tilde{\gamma}^{(t+1)}}) \mathbf{H}_l\mathbf{u}^{(t+1)}_j \mathbf{u}^{\dagger(t+1)}_j \mathbf{H}_j^{\dagger}+\sigma_l \mathbf{I}\right)^{-1}\mathbf{H}_l \mathbf{u}^{(t+1)}_l}{\left\|\left(\sum_{j \neq l} p_{j}(\bm{\Tilde{\gamma}^{(t+1)}}) \mathbf{H}_j\mathbf{u}^{(t+1)}_j \mathbf{u}^{\dagger(t+1)}_j \mathbf{H}^{\dagger}_j+\sigma_l \mathbf{I}\right)^{-1}\mathbf{H}_l\mathbf{u}^{(t+1)}_l\right\|}.
\end{align*}

\begin{algorithm}
\SetAlgoLined
\KwIn{The system parameters $\G$, $\n,\mathbf{w},\Bar{P}$ and randomly initialized   $\Tilde{\bm{\gamma}}(0)$.}
\KwOut{The  optimal parameter $\bm{\theta}^*$ and $\bm{\omega}^*$.} 
1. Predict the action $\bm{a}^{(t)}$ with policy network $\bm{\mu} (\bm{s}^{(t)};\bm{\theta})$ in \eqref{eq:policy} and assess $\bm{a}^{(t)}$ with the value network $q(\bm{s}^{(t)},\bm{a}^{(t)};\bm{\omega})$ in \eqref{eq:value-net}.\\
2. Update the state $\bm{s}^{(t+1)}$ as:
\begin{align*}
     &\bm{s}^{(t+1)}=\{Conv\left(\mathbf{H}; \bm{\theta}_c\right),\bm{\sigma},\mathbf{w},\Bar{P},\Tilde{\bm{\gamma}}^{(t)},\mathbb{U}(\bm{\Tilde{\gamma}'^{(t+1)}}),\\
    & \mathbb{V}(\bm{\Tilde{\gamma}^{(t+1)}}),
\p(\tilde{\bm{\gamma}}^{(t)}),\F\p(\tilde{\bm{\gamma}}^{(t)})\!\!+\!\!\bm{\sigma}, \textup{diag}(\mathbf{w})\log(1\!+\!\tilde{\bm{\gamma}}^{(t)})\},
\end{align*}
3. Update the reward $r^{(t)}$:
\begin{align*}
       r^{(t)}= -\sum_{l=1}^{L}w_l\left(\log\left(\frac{1+e^{\Tilde{\gamma}^{(t+1)}_l}}{1+e^{\Tilde{\gamma}^*_l}}\right)\right)^2.
\end{align*}
\!\!\!4. Predict the action with the policy network $\bm{\mu} (\bm{s}^{(t+1)};\bm{\theta})$ and assess the action with the value network $\bm{q} (\bm{s}^{(t+1)},{\bm{a}}^{(t+1)};\bm{w})$.\\
5. Update the policy network $\bm{\mu} (\bm{s}^{(t)};\bm{\theta})$ and the value network $\bm{q} (\bm{s}^{(t)},{\bm{a}}^{(t)};\bm{w})$, as presented in Section \ref{sec:trainign}.\\
6. Repeat Step 1-5 until the reward $r^{(t)}$ converges to a tolerable error $\varepsilon$.
\caption{DRL-based BCD for Downlink WSRM Beamforming}
\label{alg:alg3}
\end{algorithm}

\section{Numerical Examples}\label{sec:expm}
In this section, we demonstrate the superior performance of our DRL-based BCD algorithm in solving the WSRM problems and compare it with several baselines. The data and code to generate the WSRM problem instances, implement the BCD method and the DRL-based BCD to replicate the numerical experiments can be accessed at \url{https://github.com/convexsoft/DRL-based-BCD}.

\subsection{Performance of DRL-based BCD for Downlink WSRM with Fixed Precoders}

To clarify the experimental mechanism of our DRL-based BCD and to facilitate replication, we first demonstrate its performance on small-scale downlink WSRM problems with fixed precoders. The performance on larger-scale problems is presented in the subsection on comparison with baselines.

\subsubsection{Samples Generation}
\textcolor{black}{
To train the DRL-based BCD model, we generated $1 \times 10^5$ WSRM problem instances with fixed precoders and $3$ users. The numbers of antennas at the base station and at each user are $3$ and $2$, respectively. To ensure a diverse set of samples, the values of the channel gains, noise power, and precoders were randomly set. The only variables to be optimized are the powers for the users. The optimal solutions for the downlink WSRM problems in these samples can be obtained either by brute-force search or efficiently solved using the algorithms described in \cite{TanTSP11}, which satisfy the condition of having a positive quasi-inverse.}

\subsubsection{Model Configuration}\label{sec:model_conf}
As presented in Section \ref{sec:RL-based BCD}, both the policy network and the value network are constructed using deep neural networks. The specific configurations of these neural networks are described in TABLE \ref{tab:tab1}.
\begin{table}[]
  \centering
  \renewcommand\tabcolsep{3.0pt}
    \begin{tabular}{ c cc }
    \toprule
    \textbf{Setting} & \textbf{Policy network} & \textbf{Value network} \\
    \midrule
        Input layer & ReLu, $19$   &ReLu, $19$   \\  [1.1ex]
        Hidden layers &ReLu, $128$  & ReLu, $128$ \\  [1.1ex]
        Output layer & Sigmoid, $3$   & Identified, $ 1$   \\  [1.1ex]
        Num of layers & $5$  & $5$  \\  [1.1ex]
    \bottomrule
    \end{tabular}%
    \captionsetup{font={footnotesize, stretch=1}, justification=raggedright}
    \caption{Summary of the important parameters of DRL-based BCD for the WSRM problem.}
  \label{tab:tab1}%
\end{table}%
We used two DRL techniques, DDPG and TD3, to construct the DDPG-based BCD and TD3-based BCD algorithms. 
We trained the policy and value network using mini-batch stochastic gradient descent. 
Some key parameters are set as in TABLE \ref{tab:tab2} for training the policy and value networks.
\begin{table}[htbp]
  \centering
  \renewcommand\tabcolsep{3.0pt}
        \begin{tabular}{ c cc }
    \toprule
    \textbf{Parameters} & \textbf{DDPG-based BCD} & \textbf{TD3-based BCD} \\
    \midrule
    Maximum step & $4500$ & $5000$   \\  [1.1ex]
    Learning rate & $1\times 10^{-6}$ & $1\times 10^{-6}$ \\  [1.1ex]
    Update frequency & $--$  & $200$ \\  [1.1ex]
    Noise & $U(0, 1\times 10^{-3})$ & $U(0, 1\times 10^{-3})$\\  [1.1ex]
    Replay buffer size & $5\times 10^{3}$  & $1\times 10^{6}$  \\  [1.1ex]
    Batch size & $20$  & $20$  \\  [1.1ex] 

    \bottomrule
    \end{tabular}%
    \captionsetup{font={footnotesize, stretch=1}, justification=raggedright}
    \caption{The summary of some key parameters of DRL-based BCD for the WSRM problem.}
  \label{tab:tab2}%
\end{table}%

\subsubsection{Algorithm Results}

\begin{figure*}[htbp]
\centering
\vspace{-8mm}
\subfigure[]{
\begin{minipage}[t]{0.25\linewidth}
\centering
\includegraphics[width=1.8in]{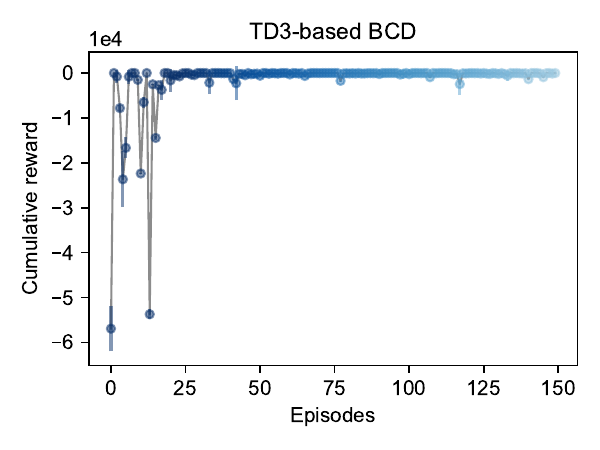}
\end{minipage}%
}%
\subfigure[]{
\begin{minipage}[t]{0.25\linewidth}
\centering
\includegraphics[width=1.8in]{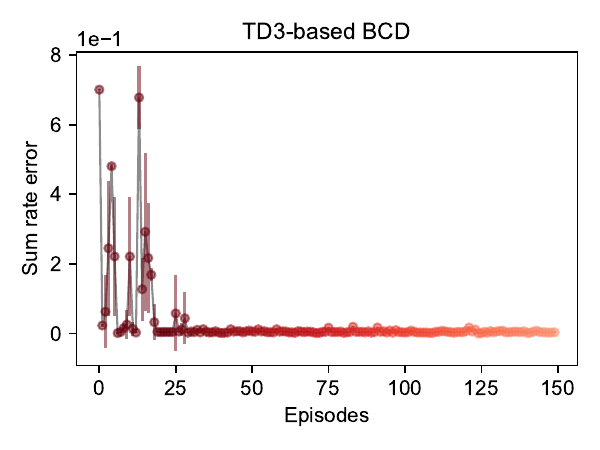}
\end{minipage}%
}%
\subfigure[]{
\begin{minipage}[t]{0.25\linewidth}
\centering
\includegraphics[width=1.8in]{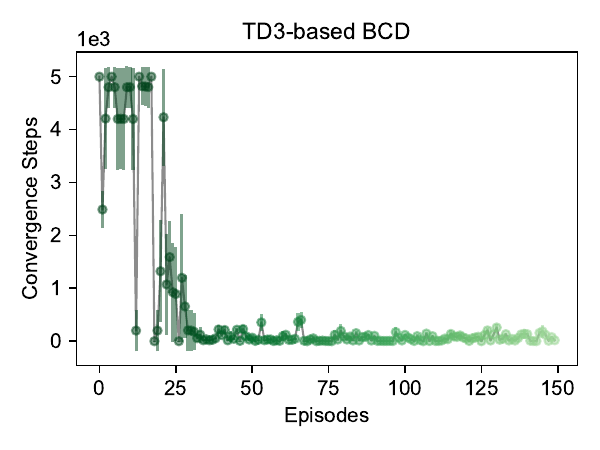}
\end{minipage}%
}%
\vspace{-3mm}
\subfigure[]{
\begin{minipage}[t]{0.25\linewidth}
\centering
\includegraphics[width=1.8in]{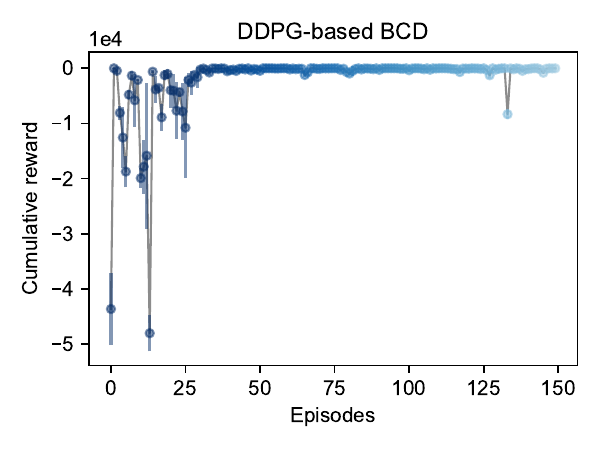}
\end{minipage}%
}%
\subfigure[]{
\begin{minipage}[t]{0.25\linewidth}
\centering
\includegraphics[width=1.8in]{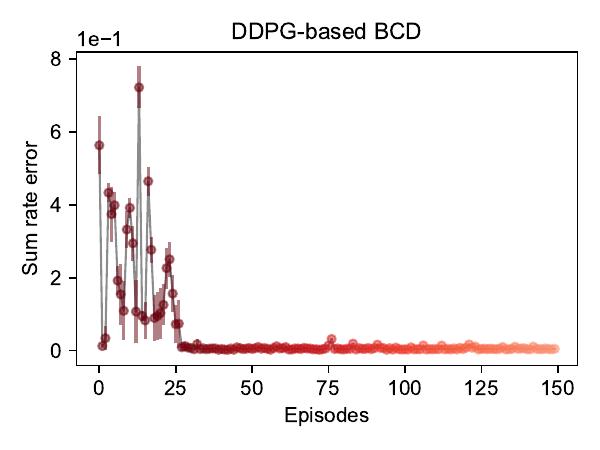}
\end{minipage}%
}%
\subfigure[]{
\begin{minipage}[t]{0.25\linewidth}
\centering
\includegraphics[width=1.8in]{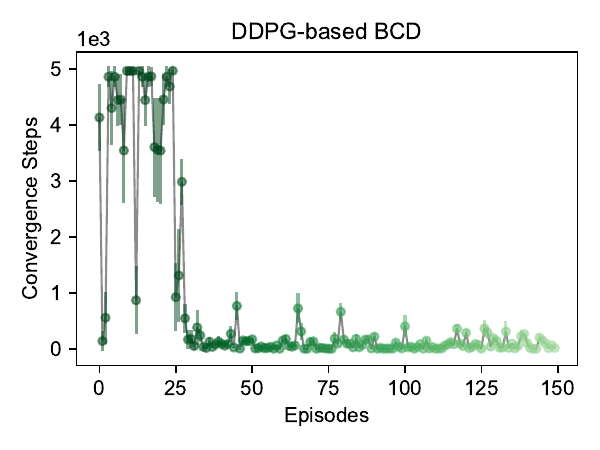}
\end{minipage}%
}%
\centering
\captionsetup{font={footnotesize}, justification=raggedright}
\caption{\textcolor{black}{Illustration with error bars on the performance of cumulative reward, sum rate error, and convergence steps for the TD3-based BCD model (a-c) and the DDPG-based model (d-f) in Algorithm \ref{alg:alg2} in solving WSRM problems.}}
\label{fig:rl-performance}
\end{figure*}

\begin{figure*}[htbp]
\centering
\vspace{-5mm}
\subfigure[]{
\begin{minipage}[t]{0.25\linewidth}
\centering
\includegraphics[width=1.8in]{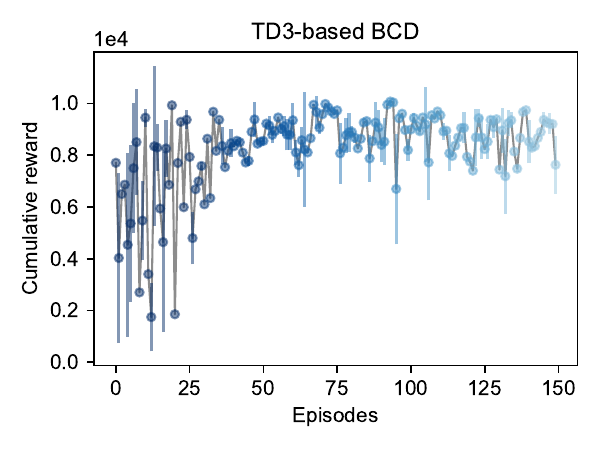}
\end{minipage}%
}%
\subfigure[]{
\begin{minipage}[t]{0.25\linewidth}
\centering
\includegraphics[width=1.8in]{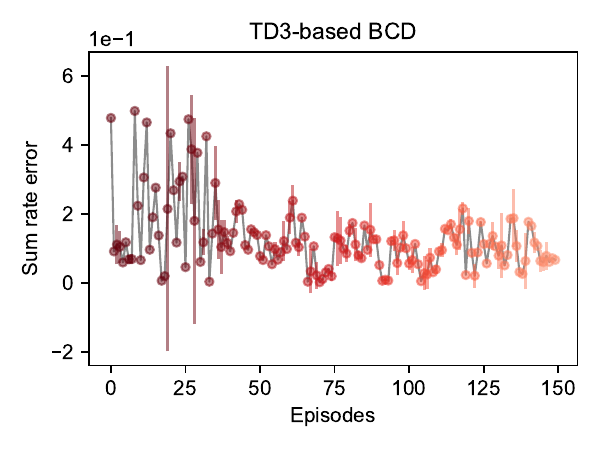}
\end{minipage}%
}%
\subfigure[]{
\begin{minipage}[t]{0.25\linewidth}
\centering
\includegraphics[width=1.8in]{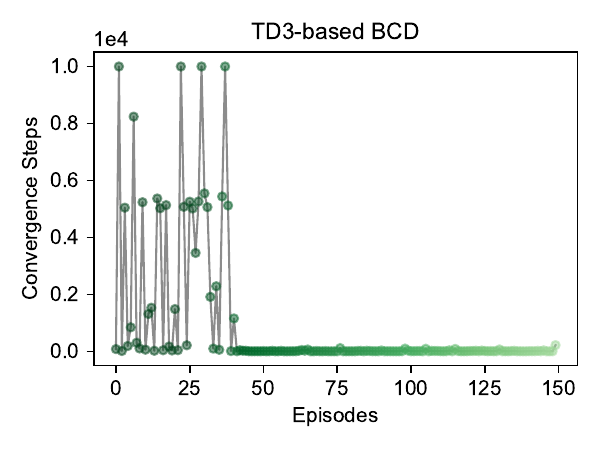}
\end{minipage}%
}%
\vspace{-3mm}

\subfigure[]{
\begin{minipage}[t]{0.25\linewidth}
\centering
\includegraphics[width=1.8in]{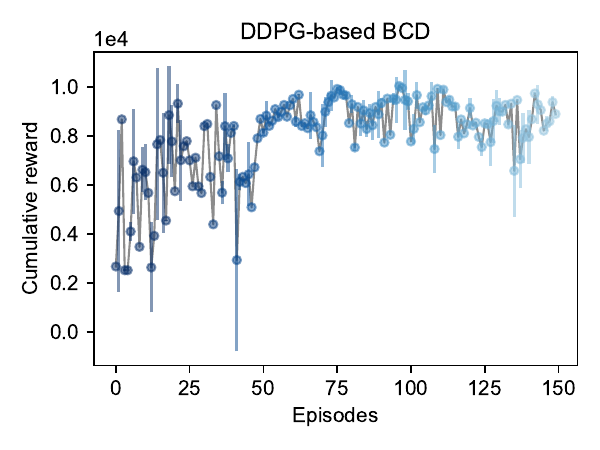}
\end{minipage}%
}%
\subfigure[]{
\begin{minipage}[t]{0.25\linewidth}
\centering
\includegraphics[width=1.8in]{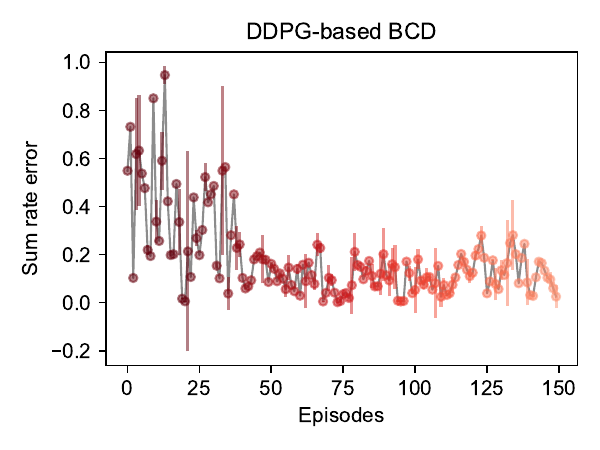}
\end{minipage}%
}%
\subfigure[]{
\begin{minipage}[t]{0.25\linewidth}
\centering
\includegraphics[width=1.8in]{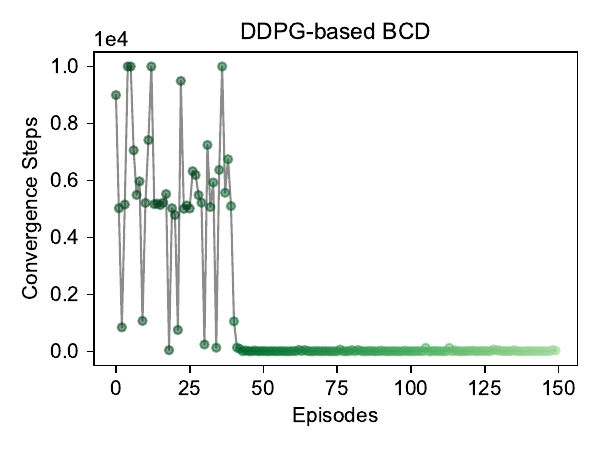}
\end{minipage}%
}%
\centering
\captionsetup{font={footnotesize}, justification=raggedright}
\caption{\textcolor{black}{Illustration with error bars on the performance of cumulative reward, sum rate error, and convergence steps for the TD3-based BCD model (a-c) and the DDPG-based model (d-f) using sum rate as the reward function in solving WSRM problems.}}
\label{fig:drl_bcd_sr_rewardf}
\end{figure*}


We evaluate the performance of the DRL-based BCD algorithm using three metrics: cumulative reward, sum rate error, and convergence steps. The cumulative reward reflects the algorithm's ability to improve the sum rate during each episode. The sum rate error measures the difference between the predicted and optimal sum rates. 
The Convergence steps indicate the number of iterations required to achieve convergence. Fig. \ref{fig:rl-performance} (a)-(c) and (d)-(e) present these metrics for the DDPG-based and TD3-based BCD, demonstrating that both models perform excellently.

\subsection{Comparison with DRL-based BCD Using Sum Rate as Reward Function}
We now compare the performance of the DRL-based BCD for downlink WSRM problems using fixed precoders with an alternative reward function instead of \eqref{eq:reward}. Here, the policy network, state transition, and value network remain consistent with the DRL-based BCD in Algorithm 2; however, the reward function is redefined as the sum rate. The long-term cumulative discounted reward, denoted as $\mathcal{V}^{(t)}$, is then expressed as
\begin{align}
    \mathcal{V}^{(t)}\!=\!\mathbb{E}\left\{\sum_{i=t}^{\infty} \alpha^i \sum_{l=1}^{L}w_l\log(1\!\!+\!\!\textup{SINR}_l\left(\mathbf{p}^{(i)}\right)) \!\mid\! s^{(t)}\right\},
\end{align}
where $\alpha$ represents the discount factor. 
This reward function has been explored in various studies, including \cite{khan2020centralized,jin2025sum}. By maximizing the cumulative discounted reward, the actor aims to achieve the maximum sum rate more quickly through self-exploration, thereby reducing the need for manually labeled optimal solutions and their associated costs. However, as shown in the results in Fig. \ref{fig:drl_bcd_sr_rewardf} (a)-(c) and (e)-(g), using the sum rate as the reward function leads to a cumulative reward that is significantly more unstable compared to the DRL-based BCD using reward as \eqref{eq:reward}. The sum rate error is also larger and more variable, with no notable improvement in convergence steps. Furthermore, using the sum rate as the reward function requires more training episodes for the DRL-based BCD models to achieve convergence.
Therefore, while the adoption of this reward function offers certain advantages, it also introduces notable challenges. In the absence of optimal solution labels for guidance, the model may engage in excessive exploration while relying on limited existing information, which can result in the model converging to local optima, thereby compromising the overall effectiveness of sum rate optimization. Furthermore, this approach generally requires extended training periods and a considerable volume of data and interactions to derive effective strategies.

\subsection{Comparison with Baselines}

\begin{figure*}[htbp]
\centering
\vspace{-8mm}
\subfigure[]{
\begin{minipage}[t]{0.215\linewidth}
\centering
\includegraphics[width=1.7in]{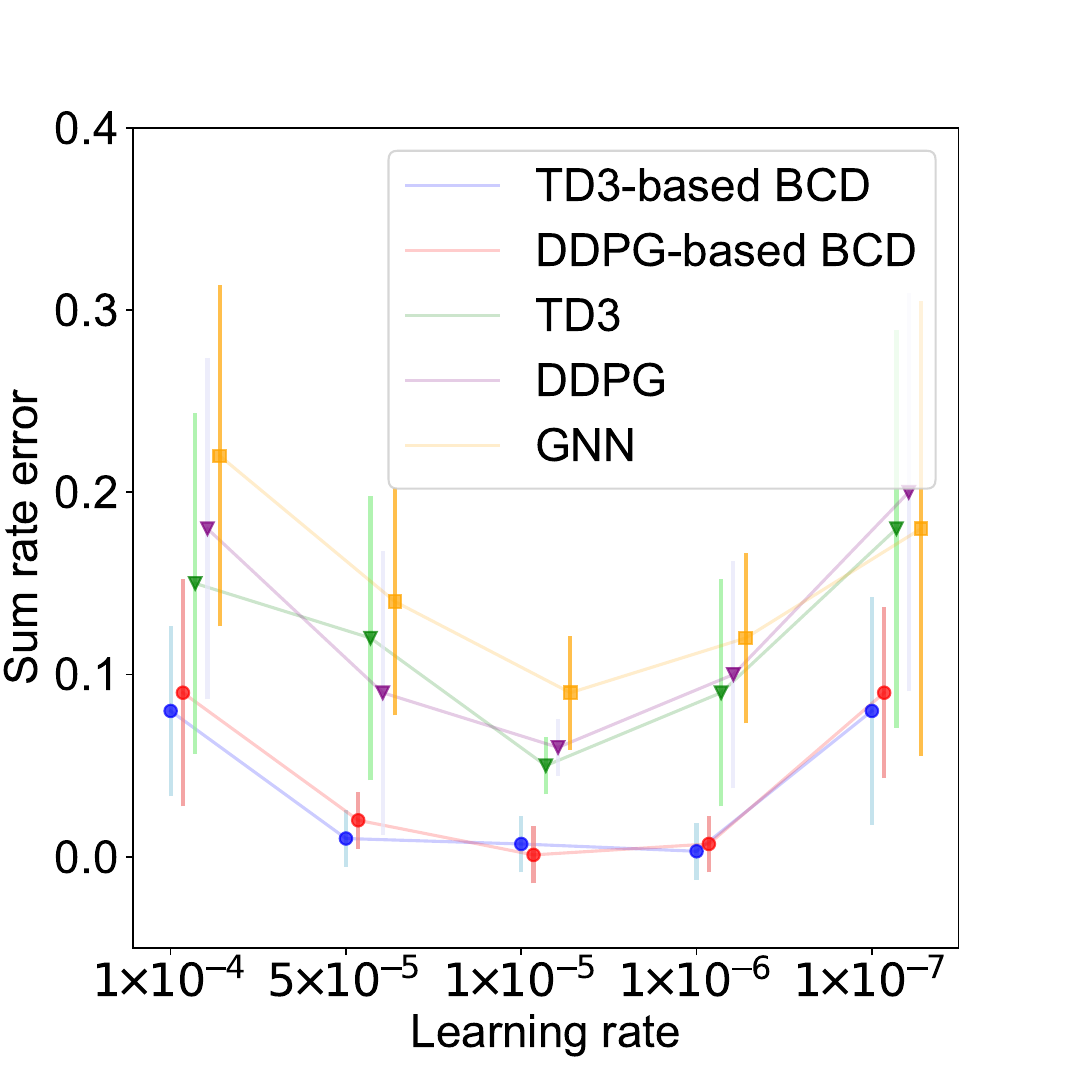}
\end{minipage}%
}%
\subfigure[]{
\begin{minipage}[t]{0.215\linewidth}
\centering
\includegraphics[width=1.7in]{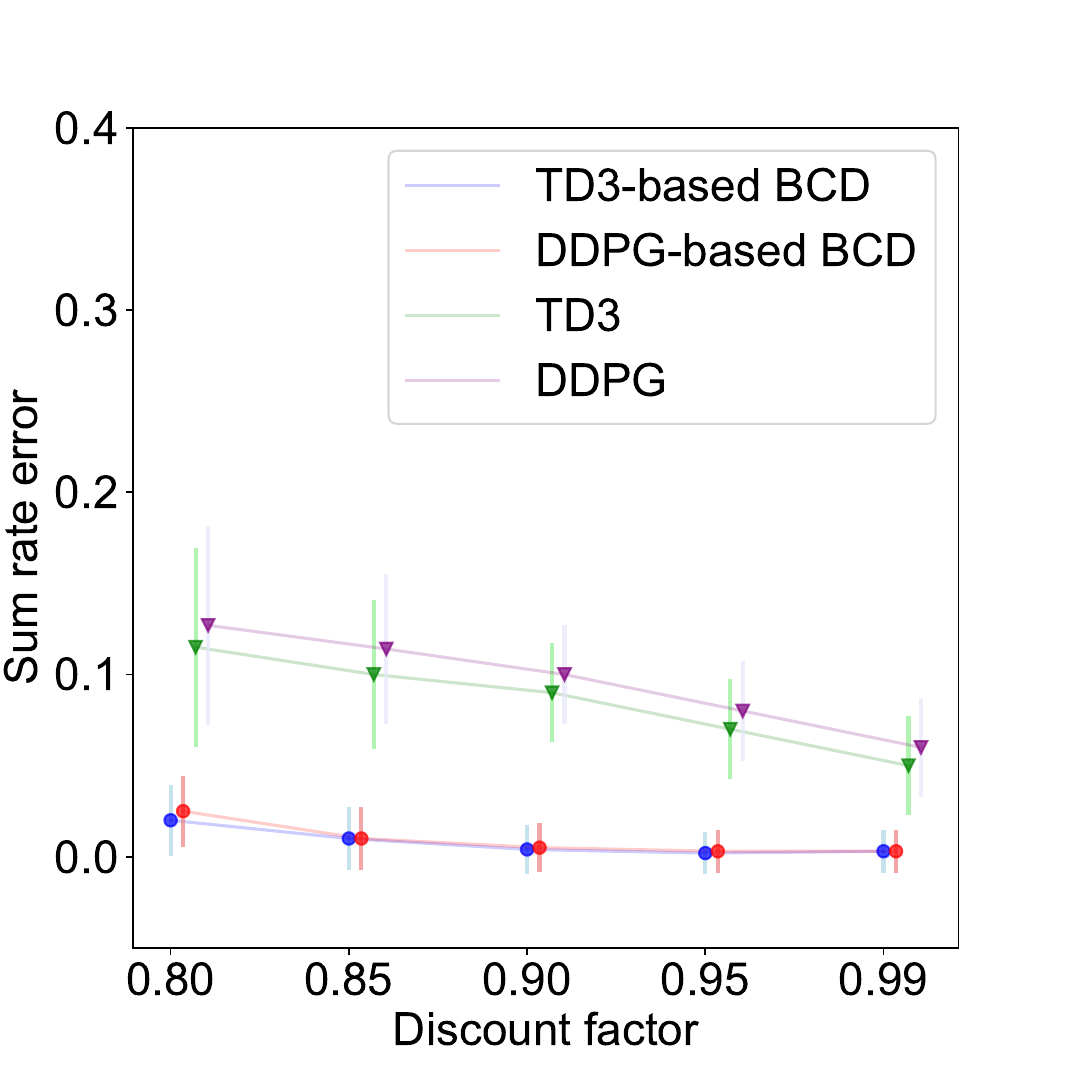}
\end{minipage}%
}%
\subfigure[]{
\begin{minipage}[t]{0.215\linewidth}
\centering
\includegraphics[width=1.7in]{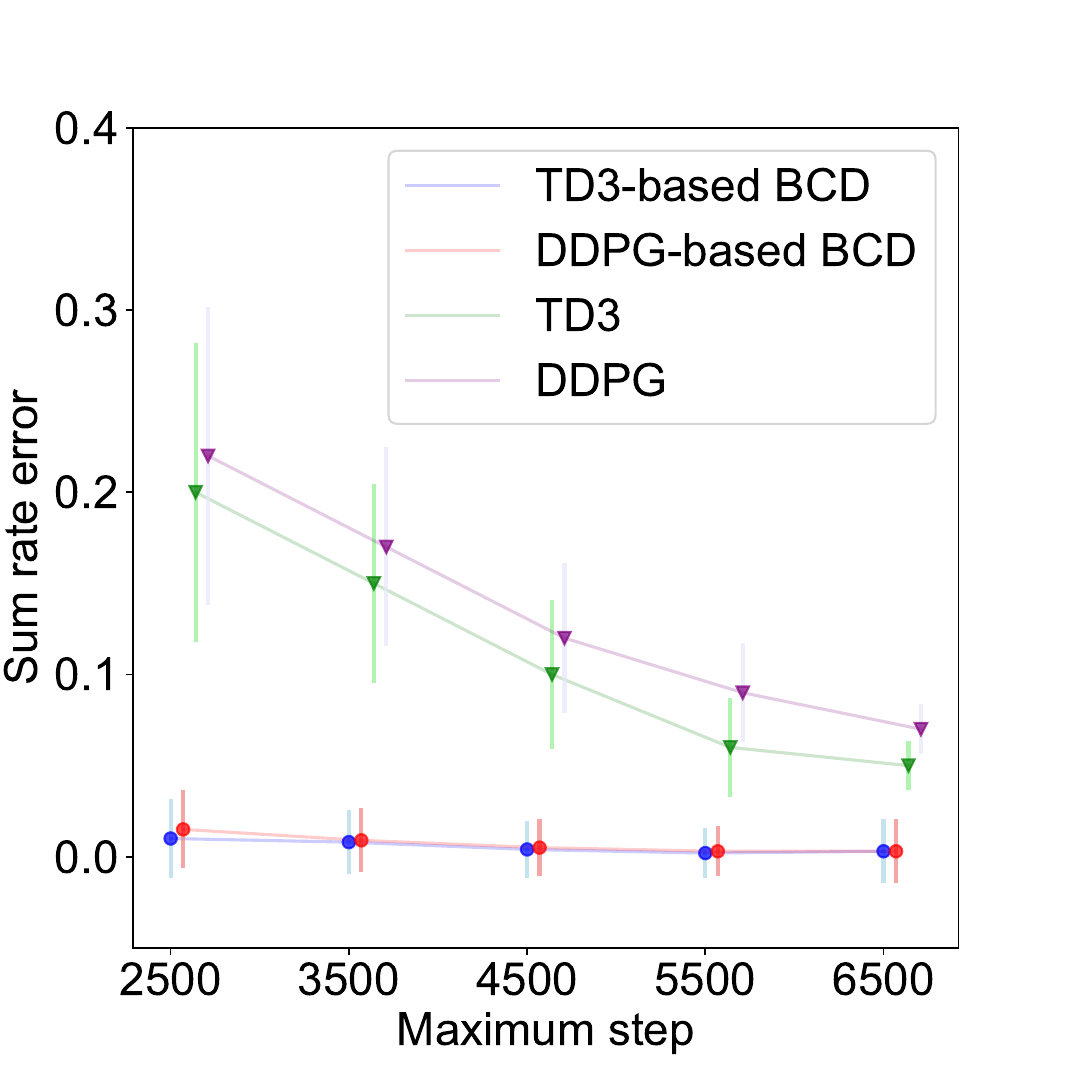}
\end{minipage}%
}%
\subfigure[]{
\begin{minipage}[t]{0.215\linewidth}
\centering
\includegraphics[width=1.7in]{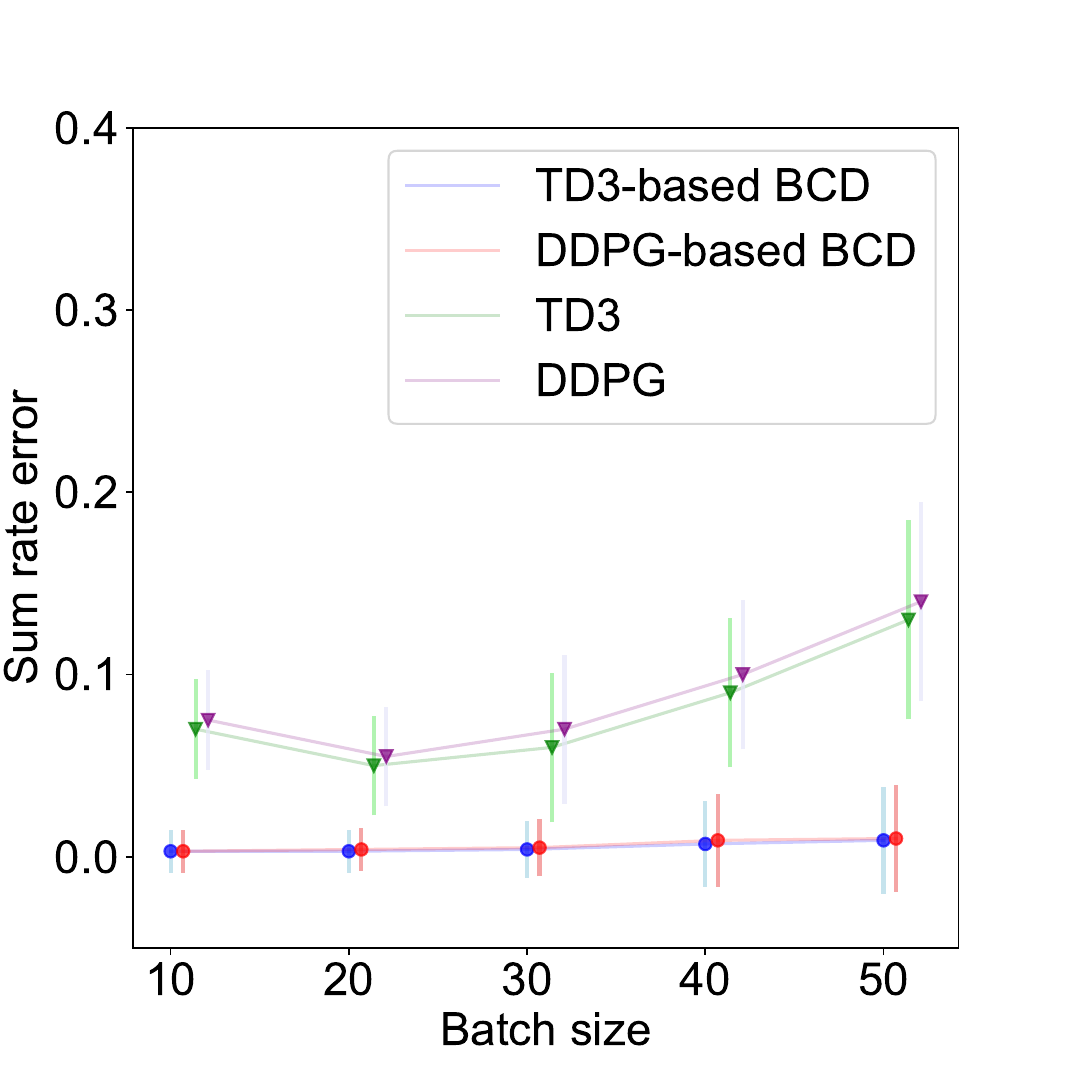}
\end{minipage}%
}%
\centering
\captionsetup{font={footnotesize}, justification=raggedright}
\caption{\textcolor{black}{ Illustration of the robustness of the proposed DRL-based BCD models compared to various purely deep learning baselines (particularly the purely DRL models) by examining the error bars of sum rate errors across different hyperparameter settings, such as learning rate (a), discount factor (b), maximum steps (c), and batch size (d).}}
\label{fig:robust_plot}
\end{figure*}

Now we compare the performance of the DRL-based BCD model with established baselines in solving larger-scale downlink WSRM problems with fixed precoders. The baselines we consider include the weighted minimum mean square error (WMMSE) \cite{shi2011iteratively}, successive convex approximation algorithm (SCA) \cite{chiang2007power}, the purely DDPG algorithm \cite{rahmani2022deep}, the purely TD3 algorithm \cite{waraiet2023robust} and the GNN algorithm \cite{lee2020graph}. The designs of the deep learning-based baselines are briefly described below:

\begin{figure*}[htbp]
\centering
\vspace{-4mm}
\subfigure[]{
\begin{minipage}[t]{0.215\linewidth}
\centering
\includegraphics[width=1.7in]{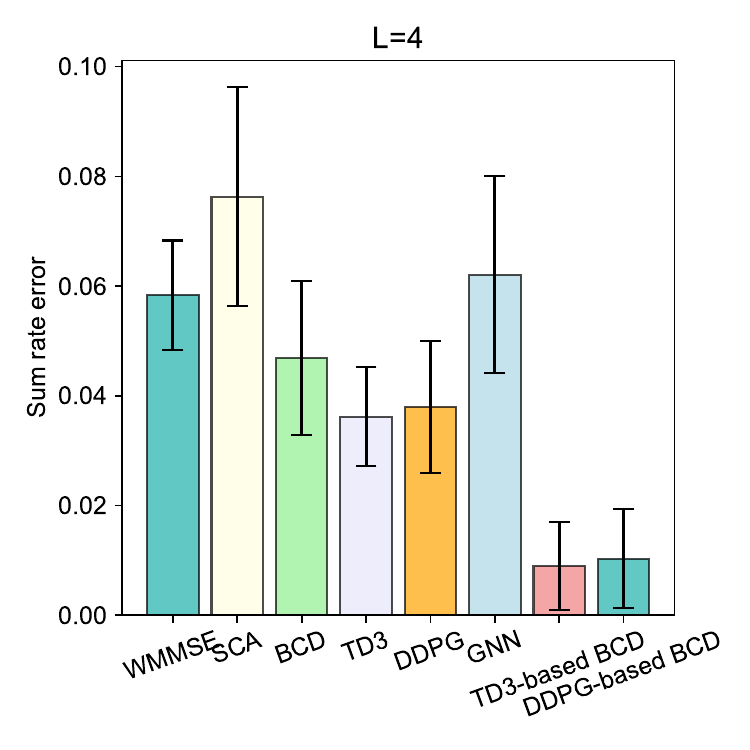}
\end{minipage}%
}%
\subfigure[]{
\begin{minipage}[t]{0.215\linewidth}
\centering
\includegraphics[width=1.7in]{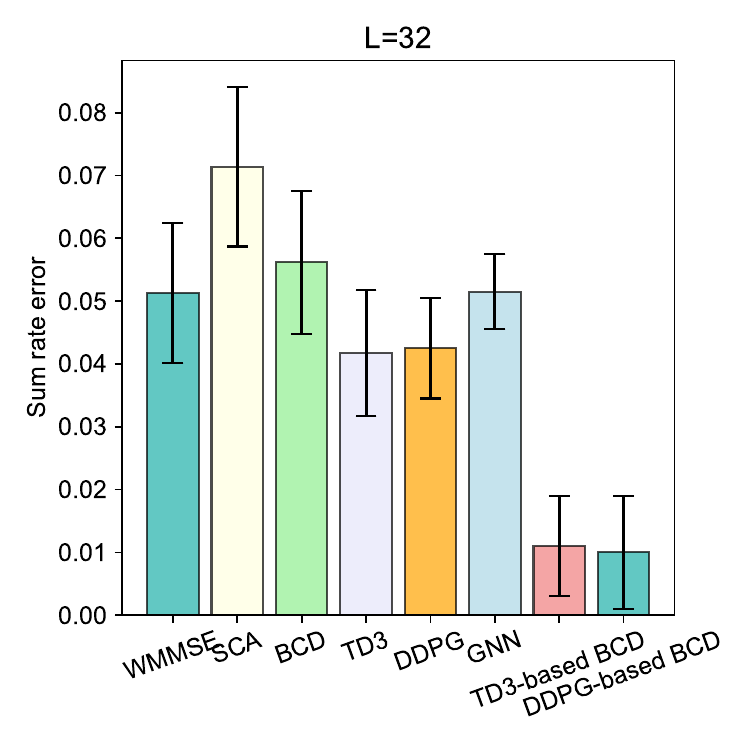}
\end{minipage}%
}%
\subfigure[]{
\begin{minipage}[t]{0.215\linewidth}
\centering
\includegraphics[width=1.7in]{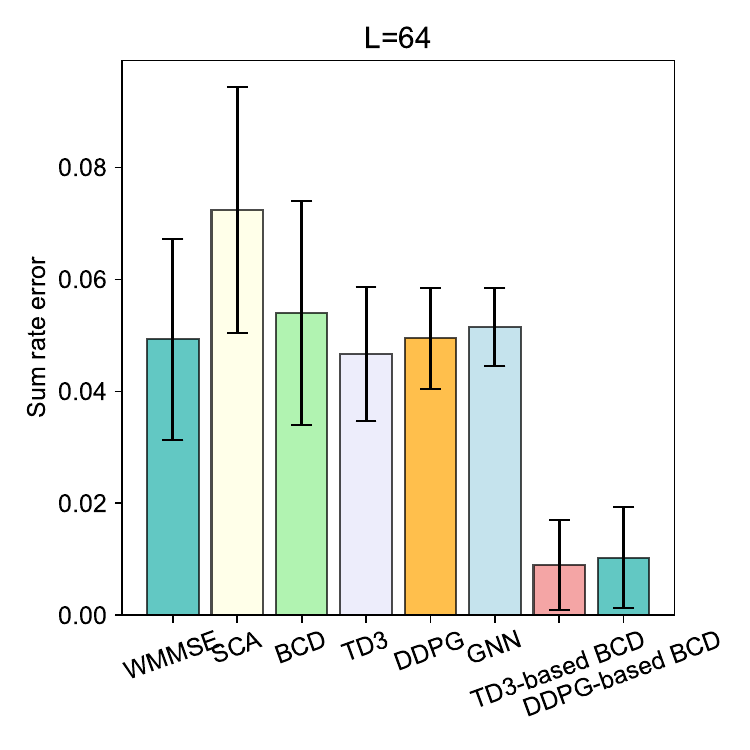}
\end{minipage}%
}%
\subfigure[]{
\begin{minipage}[t]{0.215\linewidth}
\centering
\includegraphics[width=1.7in]{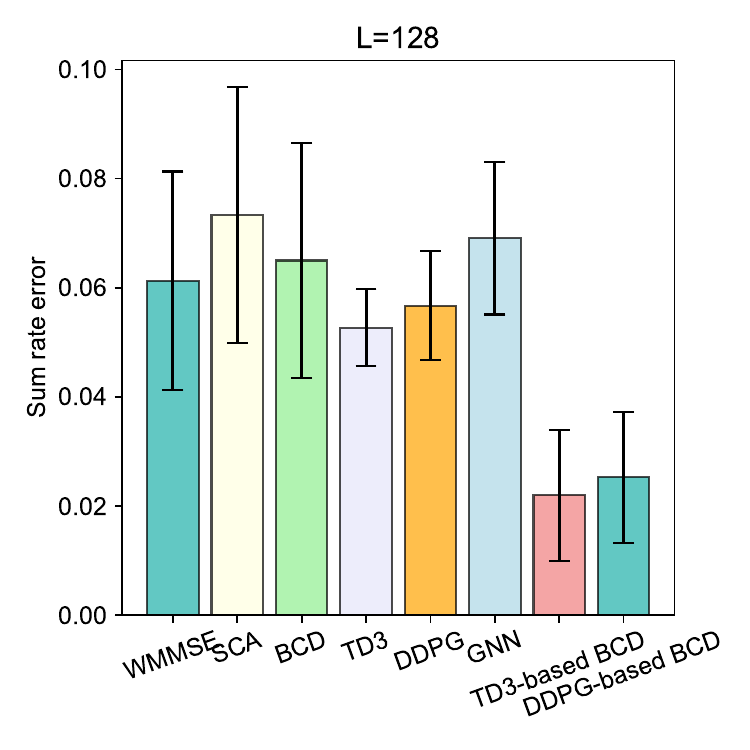}
\end{minipage}%
}%
\centering
\vspace{-2mm}
\captionsetup{font={footnotesize}, justification=raggedright}
\caption{\textcolor{black}{Illustration of the effectiveness of the proposed DRL-based BCD models compared to various baselines by examining the error bars for sum rate errors under different network scales.}}
\label{fig:sr_plot}
\end{figure*}

\begin{itemize}
    \item DDPG \cite{rahmani2022deep}: The structure of DDPG consists of an actor DNN, denoted as $\bm{\mu}\left(\bm{s}^{(t)}, \theta_\mu\right)$, and a critic DNN, denoted as $q\left(\bm{s}^{(t)}, \bm{a}^{(t)}, \theta_q\right)$. We refer to the approach presented in \cite{rahmani2022deep}, which utilizes DDPG to address the WSRM problem, a purely deep learning model. In other words, apart from the general parameters for the WSRM problem, minimal additional analytical model knowledge has been incorporated into it. The environment state $\bm{s}^{(t)}$ consists of the current user's SINRs, current power, the current interference, and objective gradients evaluated at the current power. Considering $L$ users in the wireless system, the dimension of the environment state is $3L$. Specifically, the state $\bm{s}^{(t)}$ at the $t$-th step is defined as
\begin{align*}
\bm{s}^{(t)}= \Big\{\bm{\gamma^{(t)}}, \p^{(t)},\mathbf{F}\p^{(t)}+\bm{\sigma},\textup{diag}(\mathbf{w})(1/(1+\bm{\gamma}^{(t)}))\Big\}. 
\end{align*}
The action is the change in the transmit power of the users, i.e., $\bm{a}^{(t)}=\Delta \p^{(t)} \in \mathbb{R}^L$, where the transmitted power of the user at the $t$-th step is
$$
\p^{(t+1)}=\p^{(t)}+\Delta \p^{(t)} .
$$
Finally, we choose the reward function as the objective function, i.e., the weighted sum rate.
\item TD3 \cite{waraiet2023robust}: Similarly, this baseline is also a purely deep learning model without additional analytical knowledge added to it. To train the policy network and the value network, we referred to the TD3 technique and another set of events for learning by referring to \cite{waraiet2023robust}. The state $\bm{s}(t)$ for this baseline is designed as
$$
\begin{aligned}
\bm{s}^{(t)}= \left\{\G, \p^{(t-1)},\textup{diag}(\w)\log(1+\bm{\gamma}^{(t)})\right\}.
\end{aligned}
$$
Besides, action $\bm{a}^{(t)}$ and reward $r^{(t)}$ are respectively designed as the transmit power $\bm{a}^{(t)}=\p^{(t)}$ and the weighted sum rate $\sum_{l=1}^{L}w_l\log(1+\gamma_l^{(t)})$.
\item GNN \cite{lee2020graph}: GNNs have also been a popular model used to address the WSRM problem. We refer to the work \cite{lee2020graph} to leverage GNN for the downlink sum-rate maximization problem. We represent the communication system as a graph $\bm{\mathcal{G}}(\bm{\mathcal{V}}, \bm{\mathcal{E}})$, where node set $\bm{\mathcal{V}}$ represents users and edge set $\bm{\mathcal{E}}$ represents interference between users. Node features include weights, noise power $n_i$, and maximum allowable power, while edge features represent channel gain $G_{i j}$. GNN utilizes the graph structure to capture user dependencies and interactions, updating node states by aggregating neighboring information. The GNN outputs optimal transmit power to maximize the sum-rate. Training involves optimization using techniques like backpropagation and gradient descent.
\end{itemize}




\subsubsection{Comparison of Robustness}
As is well known, DRL models typically rely on various hyperparameters, including learning rate, discount factor, and exploration strategy. The selection of these hyperparameters has a significant impact on model performance, often necessitating extensive experimentation to identify the optimal combination.
\textcolor{black}{Here, we demonstrate the robustness of our proposed DRL-based BCD model by comparing its performance stability, measured in terms of sum rate error, against various baseline models under different hyperparameter configurations.}
The baselines we consider include purely DRL models, specifically the TD3 and DDPG models, as well as a GNN model. \textcolor{black}{The results are presented for scenarios involving three users, where the transmitter and receiver are equipped with three and two antennas, respectively. We analyze the impact of four hyperparameters on model performance: learning rate, discount factor, maximum steps, and batch size.}
The comparative results are illustrated in Fig. \ref{fig:robust_plot}.
The results show that our DRL-based BCD models generally achieve lower rate errors than other baseline models. While various learning rate settings similarly affect both our model and the baselines, our model consistently performs better. The discount factor (ranging from 0.8 to 0.99) has a negligible impact on the performance of the DRL-based BCD models, and the ranges for maximum steps (from 2500 to 6500) and batch size (from 10 to 50) also have minimal effects on the sum rate error.
Conversely, varying the discount factor, maximum steps, and batch size significantly affects the purely DRL models, with maximum steps having the greatest impact. This suggests that the DRL model requires more steps to explore effectively without analyzing the intrinsic structure of the WSRM problem. Overall, these findings demonstrate that our DRL-based BCD models provide greater robustness and improved performance.

\subsubsection{Comparison of Effectiveness}
\textcolor{black}{
To compare the effectiveness of the DRL-based BCD with other baselines, we investigate systems of four different sizes: \(L=4\), \(L=32\), \(L=64\), and \(L=128\). The number of transmitter antennas corresponds to each network size, set as \(N=4\), \(N=32\), \(N=64\), and \(N=128\), respectively. The number of receiver antennas is fixed at \(N_l=2\). }
To create the training set for all models, we generate problem instances based on the conditions in the FA algorithm from \cite{TanTSP11}, which efficiently yields optimal solutions. Additionally, we use brute-force search methods to obtain further optimal solutions for scenarios where $L=4$. In practice, training samples can also be derived from historical data. Fig. \ref{fig:sr_plot} shows the sum rate error for our proposed BCD algorithm, the DRL-based BCD algorithm, and the baselines. The DDPG-based and TD3-based BCD algorithms demonstrate the highest accuracy in weighted sum rate across various network scales, with the smallest variance, indicating greater stability in the DRL-based BCD models.

\subsubsection{Comparison of Convergence Speed}

Here, we compare the convergence speed of our DRL-based BCD models with pure DDPG, pure TD3, and BCD. Convergence speed is measured by the number of steps the actor network takes to converge. As shown in Fig. \ref{fig:convergence}, although BCD converges fastest, it exhibits the highest average and variance in sum rate error, indicating poorer performance. The DRL-based BCD actor networks converge faster than pure DRL models, but their environment updates require additional iterations, potentially increasing computational time. Nevertheless, the DRL-based BCD model achieves higher precision and sum rate, which may offset this cost. Additionally, it is more robust and requires less hyperparameter tuning, facilitating easier convergence and reducing training costs.

\begin{figure}
\centerline{\includegraphics[scale=0.28]{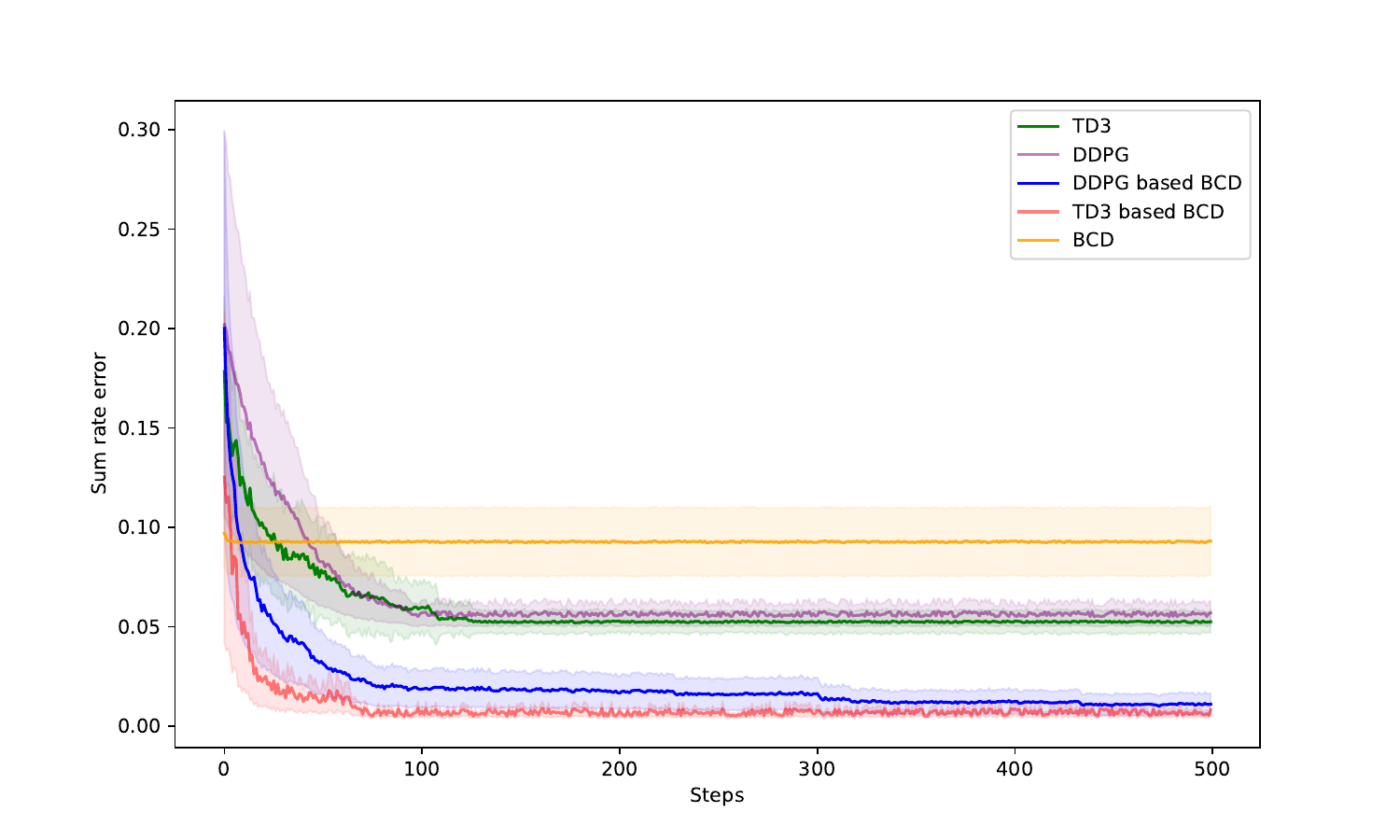}}
\captionsetup{font={footnotesize, stretch=1}, justification=raggedright}
\caption{\textcolor{black}{Illustration of the convergence speed of the DRL-based BCD models compared to the purely BCD and purely DRL models by examining the evolution of sum rate errors as the number of action steps increases. }
\label{fig:convergence}}\vspace{-2mm}
\end{figure}

\subsubsection{Comparison of Efficiency}

\textcolor{black}{
We evaluate the efficiency of the proposed DRL-based BCD model in comparison to several learning-based models as baselines, specifically DDPG, TD3, and GNN. Model efficiency is assessed using training time ($T_{\textup{Training}}$) and testing time ($T_{\textup{Testing}}$). The results are presented in TABLE \ref{tab:time}. 
In general, DRL-based models often do not provide significant advantages in either training or testing time compared to the GNN models, as also demonstrated by our results. However, it is important to note that our model's performance, particularly in terms of sum rate accuracy, significantly surpasses that of the GNN model. On the other hand, although our DRL-based BCD models do not exhibit a substantial advantage in testing time compared to the purely DRL models, they require less training time. This is primarily because, during testing, while the actor networks of the DRL-based BCD models converge in fewer steps, their environment updates necessitate additional iterations, potentially increasing computational time. However, during training, the DRL-based BCD model is more robust and requires less hyperparameter tuning, which facilitates convergence and reduces training costs. Furthermore, its sum rate accuracy is also superior to that of the purely DRL models.
}

\begin{table*}[h]
\footnotesize
\centering
\begin{tabular}{|c|cc|cc|cc|}
\hline
System Scales                                  & \multicolumn{2}{c|}{L=4}                                              & \multicolumn{2}{c|}{L=64}                                             & \multicolumn{2}{c|}{L=128}                                                                                      \\ \hline
Metrics                                        & \multicolumn{1}{c|}{$T_{\textup{Training}}(s)$} & $T_{\textup{Testing}}(s)$ & \multicolumn{1}{c|}{$T_{\textup{Training}}(s)$} & $T_{\textup{Testing}}(s)$ & \multicolumn{1}{c|}{$T_{\textup{Training}}(s)$} & $T_{\textup{Testing}}(s)$  \\ \hline
DDPG-based BCD                                 & \multicolumn{1}{c|}{$1.2\!\!\times\!\!10^{4}$    }                  & $0.022$             & \multicolumn{1}{c|}{$2.3\!\!\times\!\!10^{5}$}                  & $0.037$                 & \multicolumn{1}{c|}{$3.8\!\!\times\!\!10^{5}$}                  &$0.069$                     \\ \hline
TD3-based BCD                                  & \multicolumn{1}{c|}{$1.4\!\!\times\!\!10^{4}$}                  &$0.021$            & \multicolumn{1}{c|}{$2.2\!\!\times\!\!10^{5}$}                  &$0.035$                 & \multicolumn{1}{c|}{$3.6\!\!\times\!\!10^{5}$}                  & $0.065$                      \\ \hline
DDPG    \cite{rahmani2022deep}                                       & \multicolumn{1}{c|}{$1.9\!\!\times\!\!10^{4}$}                  & $0.009$               & \multicolumn{1}{c|}{$2.8\!\!\times\!\!10^{5}$}                  & $0.021$                 & \multicolumn{1}{c|}{$4.6\!\!\times\!\!10^{5}$}                  & $0.052$                       \\ \hline
TD3   \cite{waraiet2023robust}                                         & \multicolumn{1}{c|}{$1.9\!\!\times\!\!10^{4}$}                  & $0.009$               & \multicolumn{1}{c|}{$2.8\!\!\times\!\!10^{5}$}                  & $0.028$                 & \multicolumn{1}{c|}{$4.7\!\!\times\!\!10^{5}$}                  & $0.054$                     \\ \hline
GNN  \cite{lee2020graph}                                          & \multicolumn{1}{c|}{$3.9\!\!\times\!\!10^{3}$}                  & $0.007$                 & \multicolumn{1}{c|}{$1.3\!\!\times\!\!10^{5}$}                  & $0.017$                 & \multicolumn{1}{c|}{$2.1\!\!\times\!\!10^{5}$}                  & $0.045$                     \\ \hline
\end{tabular}
\centering
\captionsetup{font={footnotesize, stretch=1}, justification=raggedright}
\caption{Efficiency comparison with different scales of users for the downlink sum-rate maximization.}
\label{tab:time}
\end{table*}

\vspace{-5mm}
\subsection{Performance of DRL-based BCD for Downlink WSRM Beamforming}
Now we present the effective performance of the DRL-based BCD for downlink WSRM beamforming. 
\subsubsection{Algorithm Results}
\textcolor{black}{
We first evaluate the DRL-based BCD models for downlink WSRM beamforming in small-scale networks with $L=4$, $N=4$, and $N_l=2$. Here, we present only the performance of the TD3-based BCD. The network structure and hyperparameters follow those in Subsection \ref{sec:model_conf}, except for the environment state transition, which is defined in \eqref{eq:state2}. We utilize cumulative reward, sum rate error, and convergence steps as metrics for performance evaluation. The results in Fig. \ref{fig:drl_bcd_sr_rewardf} also show the strong performance of DRL-based BCD for downlink WSRM beamforming.
}
\textcolor{black}{
\subsubsection{Beamforming Comparison}
Now we compare the BCD algorithm in Algorithm \ref{alg:alg1}, which utilizes a fixed suboptimal linear precoder for designing transmitter and receiver beamforming vectors, with a variant that employs interchangeably optimized precoders, as described in Section \ref{sec:extend} (denoted as IP-BCD). The suboptimal linear precoders for the transmitter and receiver beamforming are defined as $\mathbb{U} = \mathbf{H}^{\dagger} \mathbf{H} (\mathbf{H}^{\dagger} \mathbf{H} \mathbf{H})^{-1} \mathbf{D}$ and $\mathbf{v}_l = \frac{\mathbf{H}_l \mathbf{u}_l}{\|\mathbf{H}_l \mathbf{u}_l\|}$, where $\mathbf{D}$ is a diagonal scaling matrix ensuring $\|\mathbf{u}_l\|=1$.
We also compare the DRL-based BCD in Algorithm \ref{alg:alg2}, which uses the suboptimal linear precoder, with the DRL-based BCD in Algorithm 3, which employs interchangeably optimized precoders (denoted as IP-DRL-based BCD). We consider the following cases: $L=4, N=4, N_l=2$; $L=4, N=64, N_l=2$; $L=32, N=32, N_l=2$; and $L=32, N=128, N_l=2$. The training datasets are generated according to \cite{lai2014beamforming}. The Performance is evaluated based on sum rate accuracy and computational time (for the trained models), and the results are shown in Fig. \ref{fig:LP_IP_sr_plot} and TABLE \ref{tab:LP_IP_time}, respectively.
It is observed that for both $L=4$ and $L=32$, IP-BCD outperforms BCD, and IP-DRL-based BCD outperforms DRL-based BCD. However, as the number of transmitter antennas increases, the performance advantage of IP-BCD and IP-DRL-based BCD decreases. Additionally, the computational time of IP-BCD and IP-DRL-based BCD is higher than that of BCD and DRL-based BCD. Therefore, IP-BCD and IP-DRL-based BCD may be more suitable for scenarios with fewer transmitter antennas or where more accurate solutions are required. For larger numbers of transmitter antennas, using BCD in Algorithm \ref{alg:alg1} and DRL-based BCD Algorithm \ref{alg:alg2} with suboptimal linear precoders may be sufficient or more practical.  
}
\vspace{-0mm}
\begin{figure*}[htbp]
\vspace{-3mm}
\centering
\subfigure[]{
\begin{minipage}[t]{0.25\linewidth}
\centering
\includegraphics[width=1.8in]{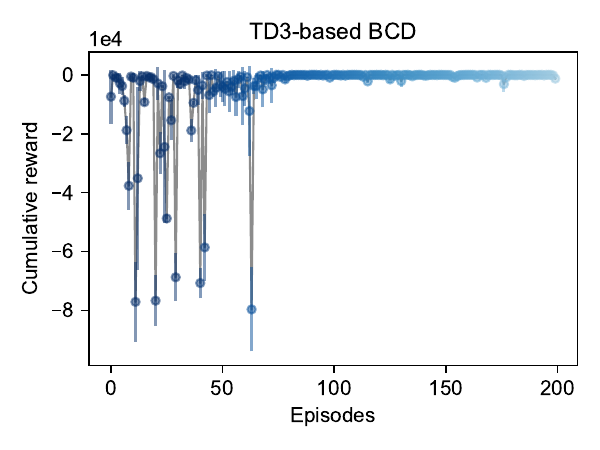}
\end{minipage}%
}%
\subfigure[]{
\begin{minipage}[t]{0.25\linewidth}
\centering
\includegraphics[width=1.8in]{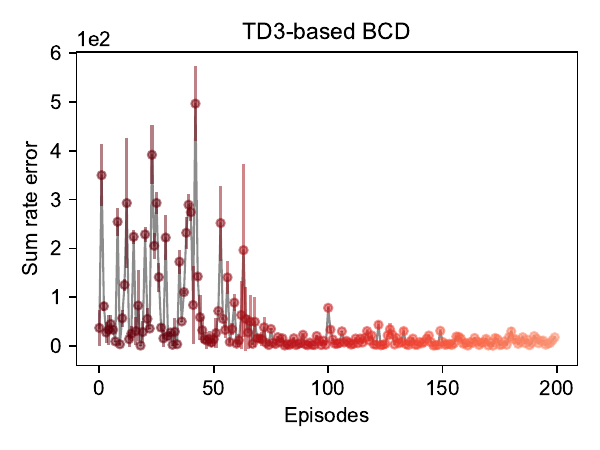}
\end{minipage}%
}%
\subfigure[]{
\begin{minipage}[t]{0.25\linewidth}
\centering
\includegraphics[width=1.8in]{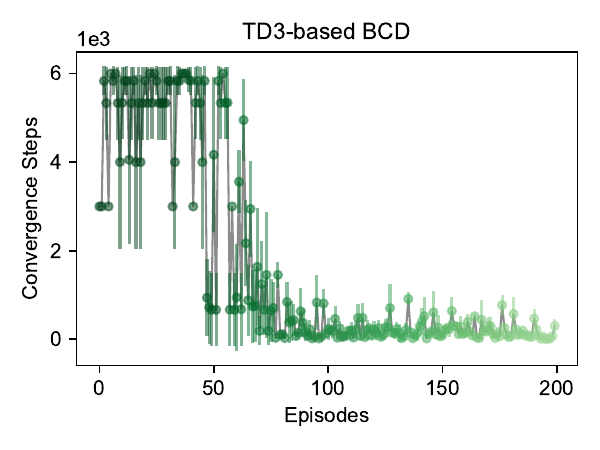}
\end{minipage}%
}%
\vspace{-1mm}
\centering
\captionsetup{font={footnotesize}, justification=raggedright}
\caption{\textcolor{black}{Illustration with error bars on the performance of cumulative reward (a), sum rate error (b), and convergence steps (c) for the TD3-based BCD model in Algorithm \ref{alg:alg3} in solving downlink WSRM beamforming.}}
\label{fig:drl_bcd_sr_rewardf}
\end{figure*}

\begin{figure*}[htbp]
\centering
\vspace{-4mm}
\subfigure[]{
\begin{minipage}[t]{0.215\linewidth}
\centering
\includegraphics[width=1.7in]{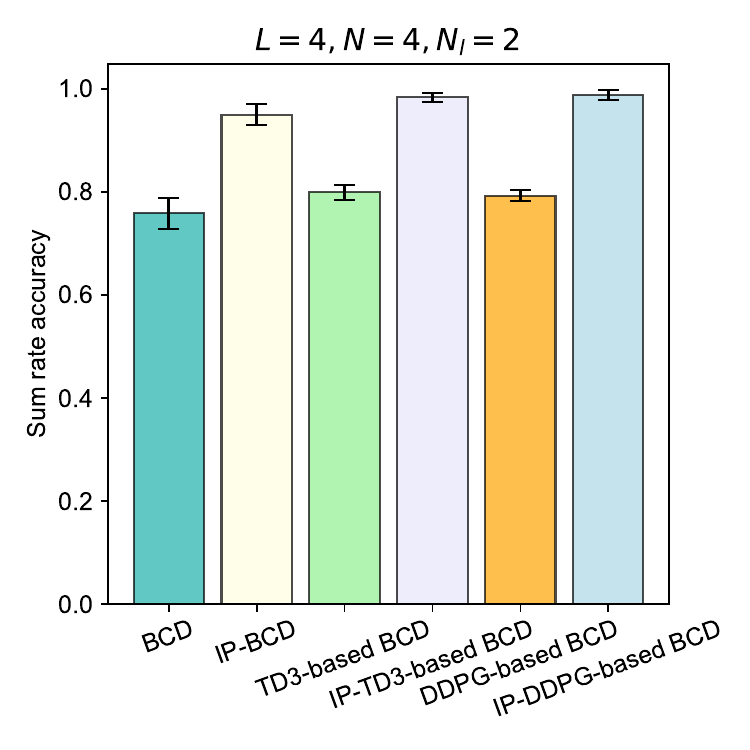}
\end{minipage}%
}%
\subfigure[]{
\begin{minipage}[t]{0.215\linewidth}
\centering
\includegraphics[width=1.7in]{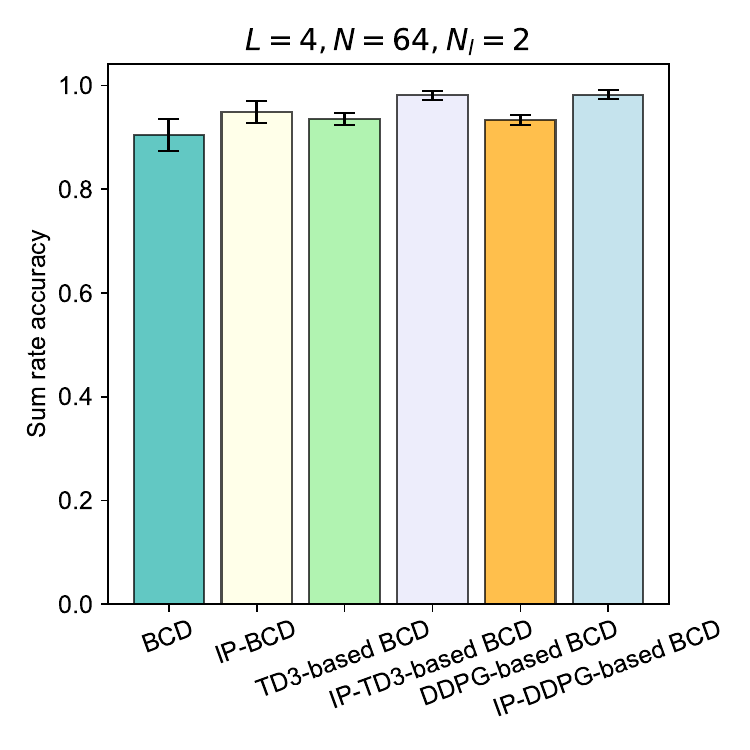}
\end{minipage}%
}%
\subfigure[]{
\begin{minipage}[t]{0.215\linewidth}
\centering
\includegraphics[width=1.7in]{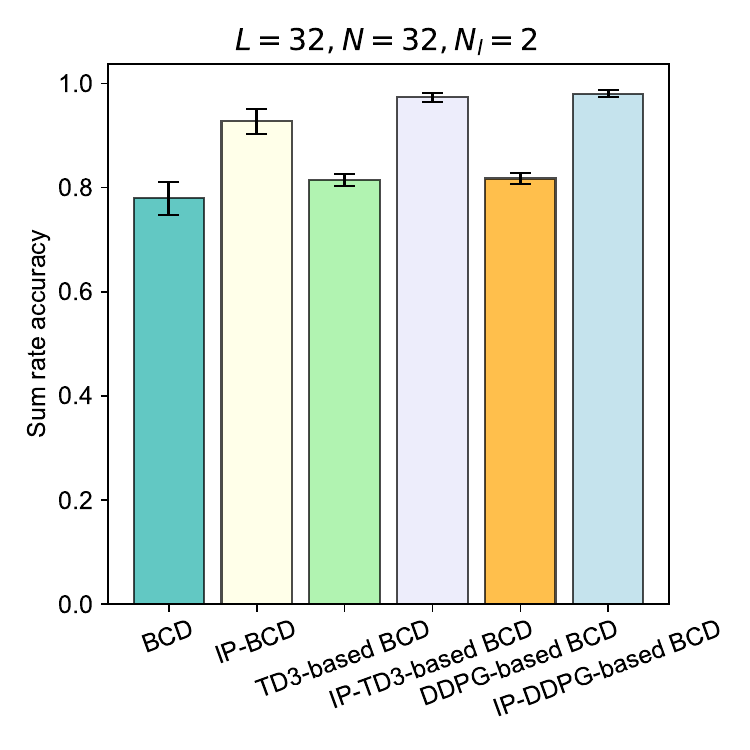}
\end{minipage}%
}%
\subfigure[]{
\begin{minipage}[t]{0.215\linewidth}
\centering
\includegraphics[width=1.7in]{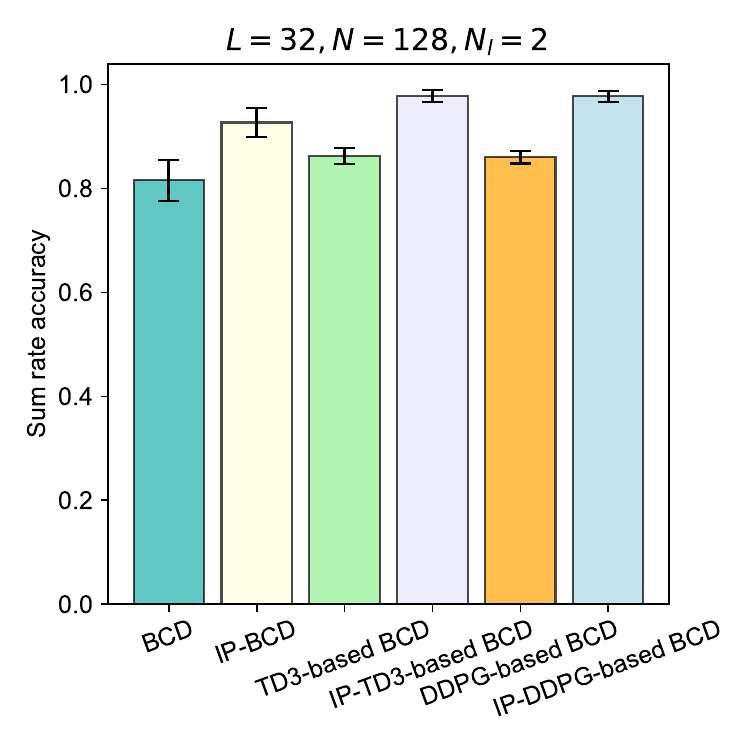}
\end{minipage}%
}%
\centering
\vspace{-2mm}
\captionsetup{font={footnotesize}, justification=raggedright}
\caption{\textcolor{black}{Illustration of the sum rate accuracy for the proposed BCD models and DRL-based BCD models, in comparison to IP-BCD model and IP-DRL-based BCD models across various network scales.}}
\label{fig:LP_IP_sr_plot}
\end{figure*}
\vspace{-3mm}

\begin{table*}[h]
\footnotesize
\centering
\begin{tabular}{|c|c|c|c|c|}
\hline
\multicolumn{1}{|c|}{Metric} & \multicolumn{4}{c|}{Computational time (s)}                                                       \\ \hline
System Scales           & $L\!\!=\!\!4,M\!\!=\!\!4,N_L\!\!=\!\!2$  & $L\!\!=\!\!4,M\!\!=\!\!64,N_L\!\!=\!\!2$  & $L\!\!=\!\!32,M\!\!=\!\!32,N_L\!\!=\!\!2$  & $L\!\!=\!\!32,M\!\!=\!\!128,N_L\!\!=\!\!2$  \\ \hline
BCD               & $5.9\times 10^{-3}$ & $5.7\times 10^{-3}$ & $7.4\times 10^{-3}$ & $8.0\times 10^{-3}$ \\ \hline
IP-BCD            & $1.8\times 10^{-1}$ & $3.4\times 10^{-1}$ & $1.4$ & $1.1\times 10^1$ \\ \hline
DDPG-based BCD    & $2.2\times 10^{-2}$ & $2.3\times 10^{-2}$ & $3.0\times 10^{-2}$ & $3.3\times 10^{-2}$ \\ \hline
IP-DDPG-based BCD & $4.5\times 10^{-1}$ & $9.6\times 10^{-1}$ & $1.9$ & $1.5\times 10^1$ \\ \hline
TD3-based BCD     & $2.1\times 10^{-2}$ & $2.1\times 10^{-2}$ & $2.9\times 10^{-2}$ & $3.0\times 10^{-2}$ \\ \hline
IP-TD3-based BCD  & $3.9\times 10^{-1}$ & $9.2\times 10^{-1}$ & $1.7$ & $1.3\times 10^1$  \\ \hline
\end{tabular}
\captionsetup{font={footnotesize, stretch=1}, justification=raggedright}
\caption{\textcolor{black}{Computational time comparison for the proposed BCD models and DRL-based BCD models, in comparison to IP-BCD model and IP-DRL-based BCD models across various network scales.}}
\label{tab:LP_IP_time}
\end{table*}



\section{Conclusion}\label{sec:conclusion}

In this study, we have introduced a DRL-based BCD model to solve the nonconvex downlink sum-rate maximization problem with a power constraint, combining deep learning and optimization theory to improve performance by reducing sensitivity to initial points and avoiding local optima. The model excelled in adhering to constraints, enhancing accuracy, and potentially achieving optimal solutions. Unlike traditional deep learning methods, it maintained interpretability and was easier to train by leveraging the problem's structure through theoretical analysis. Furthermore, we demonstrated that the DRL-based BCD framework is highly extensible and can be effectively applied to other scenarios, such as joint beamforming for sum rate maximization. Numerical experiments confirmed the algorithm's effectiveness, efficiency, and robustness. Future work will focus on integrating the mathematical structure of sum rate maximization with data-driven techniques to develop distributed algorithms with low computational complexity and exploring new optimization methods for other significant nonconvex problems in next-generation AI-native wireless networks.

\section{Appendix}

\subsection{Proof of Lemma \ref{thm:thm1}}
    Firstly, $\rho(\textup{diag}(e^{\Tilde{\bm{\gamma}}})\B)$ in \eqref{eq:sumrate_linear} can be obtained by solving the following geometric program:
\begin{align}\label{eq:gp_temp}
    \textup{minimize} \;\;&\rho\nonumber\\
    \textup{subject to}\;\;&\textup{diag}(e^{\Tilde{\bm{\gamma}}})\B\z\leq \rho\z,\nonumber\\
    \textup{variables:\;\;\;}&\rho,\z,
\end{align}
which is convex through a logarithmic transformation. More concretely, let $\Tilde{\z} = \log \z$ and $\Tilde{\rho} = \log \rho$, and then the Lagrangian associate with the problem \eqref{eq:gp_temp} is 
\begin{align}
    \mathcal{L}_1(\Tilde{\rho},\Tilde{\z}, \bm{\nu}) = \Tilde{\rho} - \sum_l v_l \log(\sum_j b_{lj}e^{\Tilde{\gamma}_{l}+\Tilde{z}_j-\Tilde{z}_l-\Tilde{\rho}}),
\end{align}
where $\bm{\nu}=[\nu_1,\cdots,\nu_L]^T$ is the dual variable. 
Substituting \eqref{eq:gp_temp} into \eqref{eq:sumrate_linear},  then the Lagrange dual function  associate with \eqref{eq:sumrate_linear} is defined as
\begin{align}
   &g(\lambda,\bm{\nu})=\inf_{\Tilde{\rho},\Tilde{\z},\Tilde{\bm{\gamma}}} \mathcal{L}_2(\Tilde{\bm{\gamma}},\lambda,\Tilde{\rho},\Tilde{\z}, \bm{\nu}) \nonumber\\
   =& \inf_{\Tilde{\bm{\gamma}}}\Bigg(-\sum_lw_ly_{l2}^{(t)}\Tilde{\gamma}_{l}\nonumber\\
    &+ \lambda\inf_{\Tilde{\rho},\Tilde{\z}}\left(\Tilde{\rho} + \sum_l v_l \log\sum_j b_{lj}e^{\Tilde{\gamma}_{l}+\Tilde{z}_j-\Tilde{z}_l-\Tilde{\rho}}\right)\Bigg).
\end{align}
Taking the derivative of $\mathcal{L}_2(\Tilde{\bm{\gamma}},\lambda,\Tilde{\rho},\Tilde{\z}, \bm{\nu})$ with respect to $\Tilde{\bm{\gamma}},\Tilde{\rho},\Tilde{\z}$, we have
\begin{align}
    \frac{\partial \mathcal{L}_2(\Tilde{\bm{\gamma}},\lambda,\Tilde{\rho},\Tilde{\z}, \bm{\nu})}{\partial \Tilde{\gamma}_l}=&-w_ly_{l2}^{(t)}+\lambda v_l,\\
    \frac{\partial \mathcal{L}_2(\Tilde{\bm{\gamma}},\lambda,\Tilde{\rho},\Tilde{\z}, \bm{\nu})}{\partial \Tilde{\rho}}=&\lambda -\lambda \sum_i v_i,\\
    \frac{\partial \mathcal{L}_2(\Tilde{\bm{\gamma}},\lambda,\Tilde{\rho},\Tilde{\z}, \bm{\nu})}{\partial \Tilde{z}_l}=& \sum_i v_i \frac{b_{il} e^{\Tilde{z}_l}}{\sum_{j}b_{ij}e^{\Tilde{z}_j}} -v_l.
\end{align}
By setting the above derivatives as zeros, we have $\lambda  = \sum_l w_ly_{l2}^{(t)}$, $v_l =w_ly_{l2}^{(t)}/\sum_i w_iy_{i2}^{(t)}$ and
\begin{align}\label{eq:iter_temp}
{z}_l =  \frac{w_ly_{l2}^{(t)}}{\sum_i w_iy_{i2}^{(t)} \frac{b_{il}}{\sum_j b_{ij}{z}_j}}.
\end{align}
Since the right-hand side of \eqref{eq:iter_temp} is a concave function with respect of ${\mathbf{z}}$, according to Krause's theorem \cite{krause2001concave}, $\lim_{k\rightarrow \infty}\mathbf{z}^{(t)}$ in \eqref{eq:iteration} converges geometrically fast to its fixed point $\mathbf{z}^*$, and $t\mathbf{z}^*$ for any positive $t$ is an optimal solution to \eqref{eq:gp_temp} for $\mathbf{z}$, that is, the right  Perron-Frobenius eigenvector of $\textup{diag}(e^\r)\B$.
Then, once $\mathbf{z}^*$ is known, the optimal $\rho^*$ of \eqref{eq:gp_temp} can be regarded as a function of $\Tilde{\bm{\gamma}}$, that is,
\begin{align}\label{eq:rho_r}
    \rho^*(\Tilde{\bm{\gamma}})=\max\left\{   \frac{\left(\diag(e^{\Tilde{\bm{\gamma}}})\B\mathbf{z^*}\right)_l}{z_l^*}, l=1,\cdots,L\right\}.
\end{align}
On substituting \eqref{eq:rho_r} into \eqref{eq:sumrate_linear}, we can reformulate \eqref{eq:sumrate_linear} into the following equivalent problem:
\begin{align}\label{eq:sumrate_temp_v2}
\textup{maximize} \;\;\; &\sum_{l=1}^{L}w_ly_{l2}^{(t)}\Tilde{\gamma}_l \nonumber\\
\textup{subject to}  \;\;\;&\left(\diag(e^{\Tilde{\bm{\gamma}}})\B\mathbf{z^*}\right)_l\leq z^*_l, l=1,\cdots,L,\nonumber\\
\textup{variables:} \;\;\;&\Tilde{\gamma}_l, \;\forall\;l.
\end{align}
which is a convex optimization problem. Introducing the Lagrange multipliers $\bm{u}$ of \eqref{eq:sumrate_temp_v2}, we obtain the following KKT conditions:
\begin{align}
    u_l\geq 0, \quad u_l(e^{\Tilde{\gamma}_l}(\B\mathbf{z^*})_l- z^*_l) = 0, \quad l=1,\cdots,L,\nonumber
\end{align}
\begin{align}
    -w_ly_{l2}^{(t)}+u_l e^{\Tilde{\gamma}_l}(\B\mathbf{z^*})_l = 0,  \quad l=1,\cdots,L.\nonumber
\end{align}
Starting by substituting the complementary slackness conditions into the last equations, we can directly solve these KKT conditions, and find that $u_l^* = w_ly_{l2}^{(t)}/z_l^*$ and $\Tilde{\gamma}_l^* = \log(z^*_l/(\B\mathbf{z^*})_l)$.

\subsection{Proof of Lemma \ref{lem:lem2}}

It is easy to see that the maximizer for $y_{ln}$ in 
\begin{align}\label{eq:sumrate_temp_v3}
\textup{maximize} \;\;\; &\sum_{l=1}^{L}w_ly_{l2}^{(t)}\Tilde{\gamma}_l \nonumber\\
\textup{subject to}  \;\;\;&\left(\diag(e^{\Tilde{\bm{\gamma}}})\B\mathbf{z^*}\right)_l\leq z^*_l, l=1,\cdots,L,\nonumber\\
\textup{variables:} \;\;\;&\Tilde{\gamma}_l, \;\forall\;l.
\end{align}
is strictly convex; so the solution of \eqref{eq:sumrate_temp_v3} at each step is unique. Further, since the right-hand side of \eqref{eq:iter_temp} is also strictly concave, the iteration of \eqref{eq:iteration} has only one convergence point. Thus, the maximizing $\Tilde{\gamma}_l$ in \eqref{eq:sumrate_linear} is also unique. Then from \cite{tseng2001convergence} we have that the BCD method in Algorithm 1 will converge.

\subsection{Convergence Result}


To achieve convergence, the deterministic policy gradient method for continuous control, as stated in \cite{xiong2022deterministic}, must satisfy the following assumptions:

\begin{assumption}\cite{xiong2022deterministic}
     1) For any $\bm{\theta}_1, \bm{\theta}_2$, there exist positive constants $L_{\bm{\mu}}, L_\psi$ and $\lambda_{\Psi}$, such that $\left\|\bm{\mu}_{\bm{\theta}_1}(\bm{s})-\bm{\mu}_{\bm{\theta}_2}(\bm{s})\right\| \leq L_{\bm{\mu}}\left\|\bm{\theta}_1-\bm{\theta}_2\right\|, \forall  \bm{s} \in \bm{\mathcal{S}}$, $\|\nabla_{\bm{\theta}} \bm{\mu}_{\bm{\theta}_1}(\bm{s})- $ $\nabla_{\bm{\theta}} \bm{\mu}_{\bm{\theta}_2}(\bm{s})\| \leq L_\psi\left\|\bm{\theta}_1-\bm{\theta}_2\right\|, \forall  \bm{s} \in \bm{\mathcal{S}}$ and the matrix $\Psi_{\bm{\theta}}=\mathbb{E}_{\nu_{\bm{\theta}}}\left[\nabla_{\bm{\theta}} \bm{\mu}_{\bm{\theta}}(\bm{s}) \nabla_{\bm{\theta}} \bm{\mu}_{\bm{\theta}}(\bm{s})^T\right]$ is non-singular with the minimal eigenvalue uniformly lowerbounded as $\sigma_{\min }\left(\Psi_{\bm{\theta}}\right) \geq \lambda_{\Psi}$;
2) For any action $\bm{a}_1, \bm{a}_2$ in the action space $\bm{\mathcal{A}}$, there exist positive constants $L_P, L_r$, such that the transition kernel satisfies $\left|P\left(\bm{s}^{\prime} \mid \bm{s}, \bm{a}_1\right)-P\left(\bm{s}^{\prime} \mid \bm{s}, \bm{a}_2\right)\right| \leq L_P\left\|\bm{a}_1-\bm{a}_2\right\|, \forall  \bm{s}, \bm{s}^{\prime} \in \bm{\mathcal{S}}$ and the reward function satisfies $\left|r\left(\bm{s}, \bm{a}_1\right)-r\left(\bm{s}, \bm{a}_2\right)\right| \leq$ $L_r\left\|\bm{a}_1-\bm{a}_2\right\|, \forall\; \bm{s}, \bm{s}^{\prime} \in \bm{\mathcal{S}}$.
3) For any $\bm{a}_1, \bm{a}_2$ in $\bm{\mathcal{A}}$, there exists a positive constant $L_q$, such that $\left\|\nabla_a q\left(\bm{s}, \bm{a}_1\right)-\nabla_a q\left(\bm{s}, \bm{a}_2\right)\right\| \leq$ $L_q\left\|\bm{a}_1-\bm{a}_2\right\|, \forall  \bm{\theta} \in \mathbb{R}^d, \bm{s} \in \bm{\mathcal{S}}$.
4) The feature function $\bm{\phi}: \bm{\mathcal{S}} \times \bm{\mathcal{A}} \rightarrow$ $\mathbb{R}^d$ is uniformly bounded, i.e., $\|\bm{\phi}(\cdot, \cdot)\| \leq C_{\bm{\phi}}$ for some positive constant $C_{\bm{\phi}}$. In addition, we define $\mathbf{A}=$ $\mathbb{E}\left[\bm{\phi}(x)\left(\gamma \bm{\phi}\left(x^{\prime}\right)-\bm{\phi}(x)\right)^T\right]$, and assume that $\mathbf{A}$ is non-singular. Assume that the absolute value of the eigenvalues of $\mathbf{A}$ are uniformly lower bounded, i.e., $|\sigma(\mathbf{A})| \geq \lambda_{\mathbf{A}}$ for some positive constant $\lambda_{\mathbf{A}}$.
\end{assumption}

If our proposed DRL-based BCD can be configured to satisfy the above assumptions, then according to Theorem 1 in \cite{xiong2022deterministic}, the DRL-based BCD can achieve convergence. Therefore, our focus is to demonstrate that the neural networks and learning parameters in Algorithm \ref{alg:alg2} of the DRL-based BCD can satisfy the aforementioned assumptions. It is worth noting that the first assumption can be easily met by appropriately parameterized policy classes, such as the linear approximator. Assumption 3) establishes that the Q-function is smooth over action, which is a standard assumption in studies related to deterministic policies. Assumption 4) is a standard requirement in TD learning studies involving linear function approximation. The main contribution of our algorithm lies in the incorporation of the BCD algorithm to update the state $\bm{s}$. As stated in equations \eqref{eq:state}-\eqref{eq:policy}, the stochasticity of state transition originates from $\bm{\epsilon}\sim U(0, c)$ in equation \eqref{eq:policy}. Since problem \eqref{eq:sumrate_pro_r} has only a finite number of local optima, the number of states $\bm{s}$ is also finite. Consequently, the transition kernel is Lipschitz continuous. Furthermore, according to equation \eqref{eq:rl}, $\gamma_l(\p)\leq G_{ll}\bar{P}/n_l$ for any feasible $\p$. Thus, $\left|r\left(\bm{s}, \bm{a}_1\right)-r\left(\bm{s}, \bm{a}_2\right)\right| \leq \max_l 2G_{ll}\bar{P}/n_l$, indicating that the reward function is also Lipschitz continuous. Thus, Assumption 2) is satisfied. Consequently, all the assumptions in Assumption 1 can be satisfied, ensuring the convergence of the DRL-based BCD.




\bibliographystyle{IEEEtran}
\bibliography{_Reference_RL}




\end{document}